\documentclass[letterpaper,journal]{IEEEtran}

\usepackage{amsmath,amsfonts}
\usepackage{algorithmic}
\usepackage{algorithm}
\usepackage{array}
\usepackage{multirow}
\usepackage{booktabs,makecell}
\usepackage[caption=false,font=normalsize,labelfont=sf,textfont=sf]{subfig}
\usepackage{textcomp}
\usepackage{stfloats}
\usepackage{url}
\usepackage{verbatim}
\usepackage{graphicx}
\usepackage{cite}
\usepackage[hidelinks]{hyperref}
\usepackage{orcidlink}

\newtheorem{lemma}{Lemma}
\newtheorem{theorem}{Theorem}
\newtheorem{corollary}{Corollary}
\newtheorem{proposition}{Proposition}
\DeclareCaptionFont{axislabel}{\fontsize{2.1pt}{2.7pt}\selectfont}

\DeclareUnicodeCharacter{2032}{\textquotesingle}

\renewcommand{\footnoterule}{%
  \ifnum\value{page}>1
    \kern-3pt
    \hrule width 0.4\columnwidth height 0.4pt
    \kern2.6pt
  \fi
}

\hypersetup{
  pdftitle={On the Relationship between Model Quantization and Model Inversion Attacks},
  pdfauthor={Rongke Liu, Youwen Zhu}
}

\begin{document}

\title{On the Relationship between Model Quantization and Model Inversion Attacks}

\author{Rongke Liu$^{\orcidlink{0000-0003-4204-8793}}$,
Youwen Zhu$^{\orcidlink{0000-0003-4365-9713}}$%
\thanks{Manuscript received xx xx, 2026.
This work was supported by the National Natural Science Foundation of China
(No. xx), the Jiangsu Provincial Key Research and
Development Program (No. xx), and the
Postgraduate Research \& Practice Innovation Program of Jiangsu Province
(No. 26CXJH1036).
\textit{(Corresponding author: Youwen Zhu.)}}%
\thanks{Rongke Liu and Youwen Zhu are with the School of Computer Science
and Technology, Nanjing University of Aeronautics and Astronautics,
Nanjing 211106, China (e-mail: \{liurongke, zhuyw\}@nuaa.edu.cn).}%
\thanks{We thank the Institute of Automation, Chinese Academy of Sciences (CASIA), for the palmprint and iris datasets.}%
\thanks{Code is available at xx.}%
\thanks{Digital Object Identifier xx.}}

\markboth{IEEE Transactions on Pattern Analysis and Machine Intelligence}%
{Liu and Zhu: On the Relationship between Model Quantization and Model Inversion Attacks}

\maketitle

\begin{abstract}
Model quantization reduces the numerical precision of neural network weights and activations to lower storage and computational costs.
Since model inversion attacks use outputs or intermediate features to reconstruct training data or inference inputs, quantization may also alter their effectiveness.
However, this relationship remains insufficiently understood, leaving two questions unresolved:
\textit{How does model quantization affect model inversion?}
\textit{How do data characteristics influence this relationship?}
To address the first question, we bound quantization-induced changes in mutual information between inputs and a categorical variable defined by prediction probabilities, distinguishing informational effects from attack optimization obstacles.
To address the second, we identify data-dependent changes in feature distributions and inversion outcomes, with pronounced quantization sensitivity differences at 4 bits.
Building on these insights, we propose a privacy-aware post-training quantization method to improve inversion resistance while recovering utility.
It estimates layer-wise task sensitivity for candidate bit widths using a Fisher-type proxy to guide budget-aware bit allocation, calibrates activation ranges, and jointly optimizes weight and activation quantization scales and weight-rounding decisions using task-recovery and geometry-retention objectives with scale and rounding regularization.
Experiments use multiple metrics across face, palmprint, and iris recognition tasks and different neural network architectures.
For ResNet-50, Palm--4-bit reduces RL-MIA's strict success from 54\% to 26\%, while accuracy decreases from 99.01\% to 96.55\%, relative to FP32.
Moreover, our defense is orthogonal to output post-processing: adding Stealthy Shield Defense (SSD, $\epsilon = 0.1$) to Iris--4.5-bit further reduces BREP-MI's strict success from 63.33\% to 37.33\%, while accuracy decreases from 92.8\% to 87.6\%, relative to quantization alone.
\end{abstract}

\begin{IEEEkeywords}
Mixed-precision quantization, model inversion attacks, model quantization, privacy protection.
\end{IEEEkeywords}

% Keep the wide background figure at the top of page 2 as the body develops.
% Zero top glue also top-aligns this figure if LaTeX uses a float-only page.
\makeatletter
\setlength{\@dblfptop}{0pt}
\makeatother
\begin{figure*}[!t]
\centering
\includegraphics[width=\textwidth]{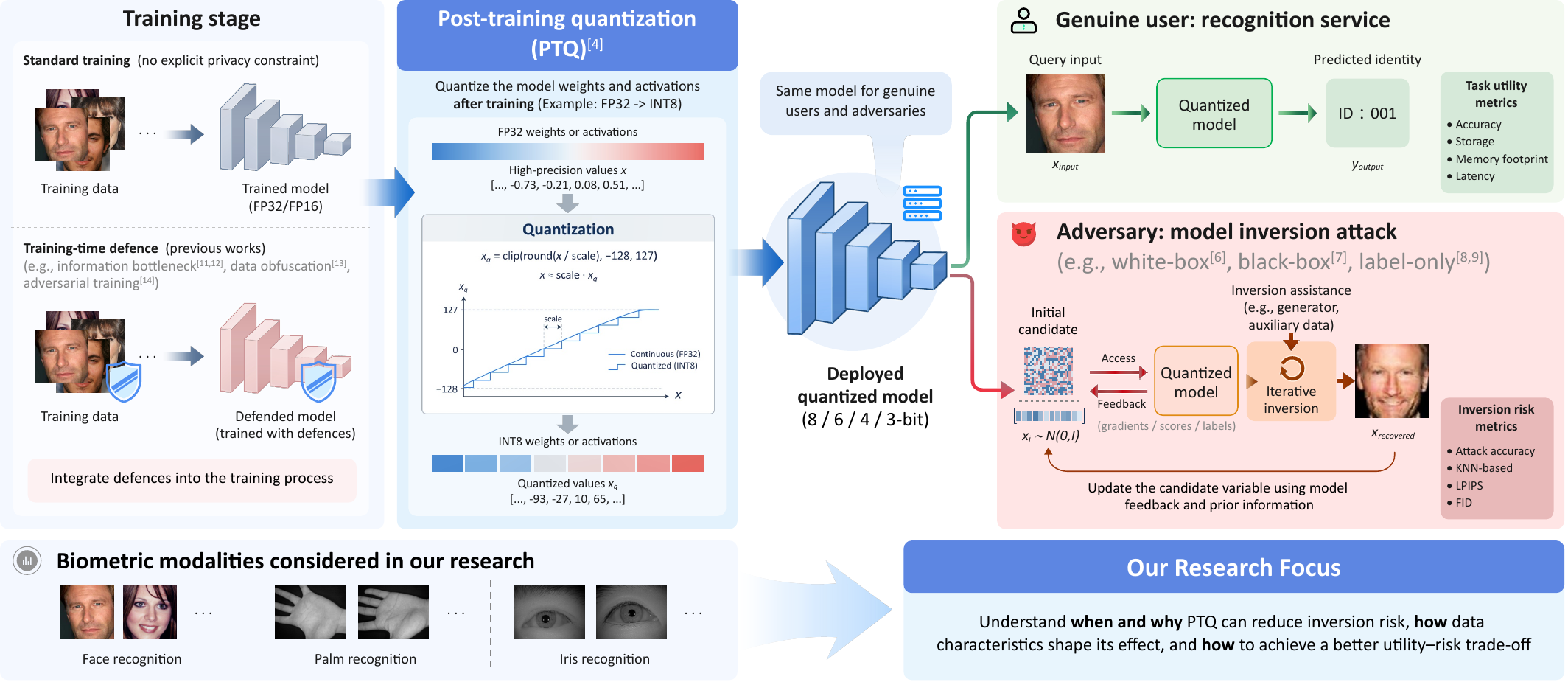}
\caption{Overview of model quantization and model inversion attacks in biometric recognition. Our research aims to understand how quantization and data characteristics affect inversion risk and to improve the trade-off between predictive utility and inversion risk.}
\label{fig:quantization_inversion_background}
\end{figure*}

\section{Introduction}
\label{sec:introduction}
\IEEEPARstart{M}{odel} quantization, a widely used model compression technique~\cite{Cheng_2018}, lowers the numerical precision of neural network weights and activations to reduce storage and computational costs~\cite{Jacob_2018_CVPR}.
As neural networks increase in scale and computational demand, quantization has become an important technique for deployment on resource-constrained devices.
Existing approaches broadly fall into quantization-aware training (QAT)~\cite{Jung_2019_CVPR}, which incorporates quantization effects during model training, and post-training quantization (PTQ)~\cite{pmlr-v119-nagel20a}, which operates on an already trained model.
In practice, deployment often begins with publicly available pretrained models, while access to the data and computational resources required for further training may be limited~\cite{Nagel_2019_ICCV}.
PTQ accommodates these constraints through calibration and, when necessary, optimization of quantization parameters, without retraining the underlying network weights.
Its broad applicability under such deployment constraints motivates our focus on PTQ.
As illustrated in Fig.~\ref{fig:quantization_inversion_background}, this process converts a pretrained full-precision model into a low-precision model.

Beyond its effects on storage and computation, quantization can alter a neural network's intermediate features and predictions~\cite{Nagel_2019_ICCV}, with potential implications for privacy.
Model inversion attacks (MIAs)~\cite{Yuan_2023,Han_2023_CVPR,Kahla_2022_CVPR,10458692} exploit information exposed through these features or outputs to recover or reconstruct sensitive content associated with training data or inference inputs.
In image recognition, such reconstructions can reveal sensitive identity information, as illustrated in Fig.~\ref{fig:quantization_inversion_background}.
Intuitively, quantization may influence what an adversary can recover from a deployed model.
Prominent existing defenses mitigate inversion risk through interventions during model training~\cite{Abadi_2016,Wang_2021,Peng_2022,Yi_2026,Xiao_2020,11537378}.
In practice, however, access to the complete original training dataset may be unavailable when deploying pretrained models, limiting the applicability of these training-time defenses~\cite{Yi_2026}.
These constraints also arise in PTQ deployment, raising a natural question: \textbf{\textit{How does model quantization affect model inversion risk?}}

However, the relationship between model quantization and MIAs remains insufficiently understood.
Existing studies have examined quantization in federated learning~\cite{Ovi_2023_CVPR} and collaborative inference~\cite{Niu_2024_CVPR,Chaopeng_2023}, as well as its effects on membership inference~\cite{317957,pmlr-v300-aubinais26a}, but address different quantization settings or privacy objectives.
In particular, defenses that quantize shared gradients to mitigate gradient inversion~\cite{Ovi_2023_CVPR} concern training-time observations and do not directly establish how PTQ affects information exposed through deployed-model predictions.
Moreover, reduced attack success may reflect changes in information availability, difficulties in attack optimization, or both.
For white-box MIAs, hard activation quantization may impede direct gradient propagation through the target model.
We therefore focus primarily on label-only attacks~\cite{Kahla_2022_CVPR,NEURIPS2023_d9827e81,10458692,10003239} and also consider black-box attacks using confidence vectors~\cite{Han_2023_CVPR}, neither of which requires target-model gradients.
We examine how quantization changes what an adversary can recover from model predictions and how accurately.

Models trained on data with different characteristics may learn distinct feature structures and respond differently to the same quantization scheme.
Since MIAs exploit these features and the resulting predictions~\cite{liu2026partitioning}, differences in quantization response may also influence inversion risk.
Face recognition has served as a common benchmark for generative image MIAs~\cite{Zhang_2020_CVPR,Chen_2021_ICCV,pmlr-v162-struppek22a}.
Although reconstruction attacks have also been studied for other biometric modalities~\cite{Yan_2024,Ahmad_2022}, how quantization affects inversion risk across modalities remains unclear.
Palmprints and irises, shown in Fig.~\ref{fig:quantization_inversion_background}, are likewise widely used biometric identifiers, yet their discriminative cues differ from those of facial images: directional line patterns in palmprints and dense, fine-grained textures in irises.
These structural differences may affect both baseline inversion difficulty and the sensitivity of learned features to quantization.
This raises a second question: \textbf{\textit{How do data characteristics shape the relationship between quantization and MIAs?}}

To address these questions, we combine information-theoretic analysis with empirical investigation.
For the first question, we trace the propagation of local quantization errors to bound changes in mutual information between inputs and a categorical random variable defined by the model's prediction probabilities.
We further analyze hard-label stability in terms of decision margins and characterize perturbations to representation geometry through feature statistics.
For the second question, we examine how within-class variation in inputs and learned representations relates to baseline inversion difficulty, distinguishing the recovery of representative images of a target identity from the reconstruction of specific inputs.
We then investigate how data texture characteristics relate to quantization-induced changes in feature distributions, model predictions, and inversion outcomes.
Experiments reveal pronounced differences in quantization sensitivity across biometric data domains and architectures at 4 bits.
These findings motivate allocating precision according to task sensitivity and constraining representation geometry during utility recovery.

Building on these insights, we propose a privacy-aware mixed-precision PTQ method to improve the trade-off between predictive utility and inversion resistance.
The method consists of 3 steps.
\textit{1) Task sensitivity: We estimate layer-wise task sensitivity at candidate bit widths.} A Fisher-type proxy~\cite{Fisher_1992} weights quantization errors to assess their potential impact on the recognition task.
\textit{2) Bit allocation and calibration: We construct an initial quantized model through budget-aware bit allocation and activation calibration.} The sensitivity estimates guide weight-bit allocation, while activation bit widths are adjusted according to a separate budget. Activation ranges are then calibrated to initialize the quantized model.
\textit{3) Quantizer refinement: We jointly optimize weight and activation quantization scales and weight-rounding decisions to balance utility recovery and inversion resistance.} The bit allocation and underlying network weights remain fixed.
Classification loss and distillation from a frozen full-precision teacher guide utility recovery.
To limit the restoration of class structure that may facilitate inversion, geometry regularization penalizes inter-class expansion in feature and output spaces relative to the initial quantized model.
Additional regularization limits scale drift and encourages soft rounding decisions to approach binary values.
By operating on quantization parameters, our method is orthogonal to output post-processing and can be combined with compatible output-level defenses.

Our evaluation covers recognition utility, prediction and representation changes, identity recovery, and quantitative and qualitative input reconstruction quality.
Quantization analyses cover FaceScrub~\cite{Ng_2014}, Palm\footnote{CASIA-Palmprint: \url{http://english.ia.cas.cn/db/201611/t20161101_169936.html}.}, and Iris-Thousand\footnote{CASIA-Iris-Thousand, a subset of CASIA-IrisV4: \url{http://english.ia.cas.cn/db/201610/t20161026_169399.html}.} using ResNet-50~\cite{He_2016_CVPR}, ViT~\cite{dosovitskiy2020vit}, and Swin Transformer~\cite{Liu_2021_ICCV}.
We evaluate label-only and confidence-vector black-box MIAs on ResNet-50 across the 3 datasets, with additional evaluations on Swin Transformer for iris recognition.
We also reconstruct individual inputs from intermediate features and predictions on FaceScrub.
Strict success requires recognition as the intended identity by both the target model and an independent evaluator; lower rates indicate fewer successful identity recoveries.
For ResNet-50, our Palm–4-bit model reduces RL-MIA's strict success~\cite{Han_2023_CVPR} from 54\% to 26\%, while accuracy decreases from 99.01\% to 96.55\%, relative to FP32.
Adding Stealthy Shield Defense (SSD, \(\epsilon=0.1\))~\cite{ICLR2025_1d8e261c} to Iris–4.5-bit lowers BREP-MI's strict success~\cite{Kahla_2022_CVPR} from 63.33\% to 37.33\% and accuracy from 92.8\% to 87.6\%, relative to quantization alone.

Our main contributions are summarized as follows:
\begin{itemize}
\item We relate quantization to MIAs by bounding mutual information changes between inputs and categorical variables defined by prediction probabilities, and analyzing hard-label stability and representation geometry \textit{(Sec.~\ref{sec:information_theory})}.
\item We analyze how within-class variation and texture characteristics relate to baseline inversion difficulty and quantization-induced changes in representations, predictions, and inversion outcomes \textit{(Secs.~\ref{sec:information_theory} and~\ref{sec:data_effects})}.
\item We propose privacy-aware mixed-precision PTQ with task-sensitive bit allocation and geometry-regularized scale and rounding optimization, balancing utility and inversion resistance \textit{(Sec.~\ref{sec:privacy_aware_ptq})}.
\item We introduce complementary evaluation metrics and conduct extensive experiments across datasets, architectures, and attack settings, including ablation studies, to validate our analyses and assess the method's utility--privacy trade-offs \textit{(Sec.~\ref{sec:experiments})}.
\end{itemize}

\section{Related Work}
\label{sec:related_work}
We review model quantization, MIAs and their defenses, and quantization-related privacy research, while clarifying the scope of our study.

\subsection{Model Quantization}
\label{subsec:model_quantization}
Model quantization approximates neural network weights and activations using a finite set of low-precision values, aiming to reduce storage and computational costs while preserving predictive performance~\cite{Jacob_2018_CVPR}.
QAT accounts for quantization effects during network training or fine-tuning~\cite{Jung_2019_CVPR}, whereas PTQ converts pretrained models through calibration and, when needed, optimization of quantization parameters~\cite{pmlr-v119-nagel20a}.
Independently of this distinction, uniform-precision methods use fixed bit widths across layers, while mixed-precision methods allocate different bit widths according to layer sensitivity~\cite{Dong_2019_ICCV}.
Although efficient 4-bit and sub-4-bit implementations have been demonstrated, their support remains tied to specific hardware architectures, numerical formats, and operator kernels~\cite{pmlr-v139-yao21a,Wei_2025}.
These constraints complicate efficient deployment of flexible layer-wise precision assignments across models and platforms, and lower bit widths do not necessarily yield proportional inference speedups~\cite{pmlr-v139-yao21a}.

In practical deployment, access to the original training data and resources for retraining a pretrained model may be limited~\cite{Nagel_2019_ICCV}.
PTQ offers flexibility under these constraints by separating quantization from the original training process, motivating our focus on this approach.
Existing PTQ methods mitigate quantization error through several strategies.
DFQ~\cite{Nagel_2019_ICCV} applies weight equalization and bias correction without requiring training data.
AdaRound~\cite{pmlr-v119-nagel20a} optimizes weight-rounding decisions using calibration data, while BRECQ~\cite{2021-iclr-brecq} reconstructs network blocks to account for dependencies between layers.
A uniform quantizer with zero point set to zero has the dequantized form~\cite{Jacob_2018_CVPR}
\begin{equation}
\begin{aligned}
\widehat v = \mathcal Q_b(v;s)= s\,\operatorname{clip}\!\left(
\operatorname{round}\!\left(\frac{v}{s}\right),
q_{\min}(b),q_{\max}(b)
\right),
\end{aligned}
\label{eq:uniform_quantization}
\end{equation}
where $v$ is a scalar weight or activation, $b$ is the bit width, $s>0$ is the quantization step size, and $q_{\min}(b)$ and $q_{\max}(b)$ define the integer code range.
The operators $\operatorname{round}$ and $\operatorname{clip}$ perform nearest-neighbor rounding and clipping, respectively.
For tensors, the operation is applied elementwise, with $s_W$ and $s_A$ denoting the weight and activation scales.

Our study examines how PTQ affects inversion risk across data domains and architectures, and uses these findings to guide privacy-aware quantization.
Accordingly, our evaluation focuses on assessing the proposed design against a calibration-based PTQ reference and through component ablations, rather than establishing superiority over alternative PTQ algorithms.
We therefore do not include additional PTQ methods primarily designed to optimize predictive utility as baselines.

\subsection{Model Inversion Attacks and Defenses}
\label{subsec:model_inversion}
Fredrikson et al.~\cite{184489} first introduced MIAs in pharmacogenetic prediction, inferring sensitive genetic attributes from trained models and auxiliary information.
Their subsequent work~\cite{Fredrikson_2015} exploited prediction confidence to recover recognizable facial images.
Subsequent image-based MIAs consider different access levels to a trained target model $f$.

White-box attacks have access to the target model's structure, parameters, and computation.
Generative white-box MIAs typically seek a representative image of a target identity $y$ by optimizing latent variables under a generative prior.
A representative objective is
\begin{equation}
z^\star\in\operatorname*{arg\,min}_{z}\left[\mathcal L_{\mathrm{id}}(G(z);f,y)+\lambda R(z)\right],
\label{eq:white_box_paradigm}
\end{equation}
where $G$ is a generator with identity conditioning implicit where applicable, $\mathcal L_{\mathrm{id}}$ an identity loss derived from $f$, and $R$ an optional regularizer with weight $\lambda\geq0$.
Gradient-based optimization through $f$ typically yields an approximate solution $\widehat z$ and the recovered image $\widehat x=G(\widehat z)$.
GMI~\cite{Zhang_2020_CVPR} learns a generative prior from public data and optimizes latent codes through the target model.
KED-MI~\cite{Chen_2021_ICCV} incorporates target predictions into the training of a generative adversarial network (GAN)~\cite{NIPS2014_f033ed80} and optimizes a latent distribution, while PPA~\cite{pmlr-v162-struppek22a} uses publicly available pretrained generators under distribution shifts.
PLG-MI~\cite{Yuan_2023} uses target-derived pseudo-labels to train a conditional GAN and organize inversion by class.
Hard rounding has zero derivatives almost everywhere, while native integer operators may lack backward support, obstructing direct-gradient white-box optimization.
As shown in Fig.~\ref{fig:plg_int8_gradient}, our PLG-MI tests on INT8-quantized Palm models exhibit no effective inversion gradient.
We therefore exclude these attacks from our large-scale cross-dataset evaluation.

% Place the INT8 gradient diagnostic at the top of the next available column.
\begin{figure}[!t]
\centering
\includegraphics[width=\columnwidth]{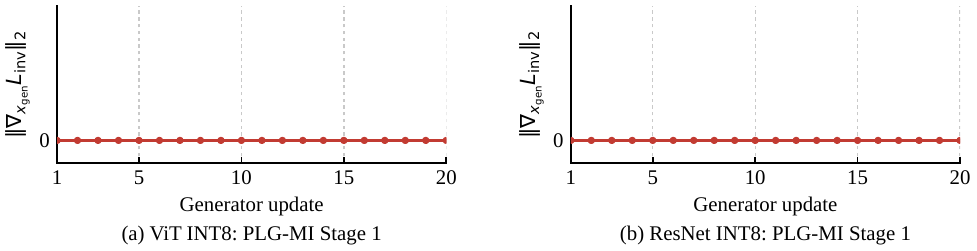}
\caption{\textbf{Blocked inversion gradients in the original PLG-MI attack~\cite{Yuan_2023} on INT8-quantized Palm classifiers.} The checkpoints were produced using PyTorch's official \href{https://docs.pytorch.org/docs/2.9/quantization-support.html}{static PTQ APIs}~\cite{NEURIPS2019_bdbca288}. Across 20 Stage-1 generator updates, the inversion-loss gradient with respect to the generated image was zero for (a) ViT and unavailable for (b) ResNet-50; unavailable gradients are plotted as zero. Separate Stage-2 first-step checks likewise found a zero latent gradient for ViT and no autograd path for ResNet-50.}
\label{fig:plg_int8_gradient}
\end{figure}

Confidence-based black-box attacks obtain prediction vectors without access to the target model's internal parameters.
Their query-driven learning or search can be summarized as
\begin{equation}
\widehat x=\mathcal A_{\mathrm{CB}}\!\left(c;\{(x_t,\mathbf p_f(x_t))\}_{t=1}^{T},\mathcal D_{\mathrm{aux}}\right),
\label{eq:confidence_paradigm}
\end{equation}
where $\mathbf p_f(x_t)$ is the class-probability vector returned for query $x_t$, $T$ the query count, and $\mathcal D_{\mathrm{aux}}$ permitted auxiliary data.
The attack condition $c$ specifies a target identity or a separately observed prediction of a particular unknown input.
The attack algorithm $\mathcal A_{\mathrm{CB}}$ may use these responses to learn an inverse mapping, evaluate search candidates, or construct rewards.
Queries may be adaptive, and attack algorithms may be randomized.
Yang et al.~\cite{Yang_2019} learn an inversion network from auxiliary inputs and target predictions, considering both reconstruction of specific inputs and recovery of representative samples of training classes.
We use reconstruction and recovery with these respective meanings throughout this paper.
The black-box variant of MIRROR~\cite{An.Mirror.NDSS.2022} performs confidence-guided genetic search in the latent space of StyleGAN~\cite{Karras_2019_CVPR}.
RL-MIA~\cite{Han_2023_CVPR} formulates latent-space exploration as a Markov decision process~\cite{Puterman_1990} and uses prediction confidence to construct rewards.
Both methods primarily seek representative images of target identities in the settings considered here.

Label-only MIAs observe only the predicted class $h_f(x)=\arg\max_k p_{f,k}(x)$, with a fixed rule for resolving ties.
For recovery of a target identity $y$, label-only access can be expressed as
\begin{equation}
\widehat x=\mathcal A_{\mathrm{LO}}\!\left(y;\{(x_t,h_f(x_t))\}_{t=1}^{T},\mathcal D_{\mathrm{aux}}\right).
\label{eq:label_only_paradigm}
\end{equation}
Here, $\mathcal A_{\mathrm{LO}}$ uses labels to guide search, learn surrogates, or train conditional generators, without target-model confidence scores.
BREP-MI~\cite{Kahla_2022_CVPR} uses label queries to guide latent-space search through boundary repulsion.
Zhu et al.~\cite{10003239} estimate confidence vectors from label responses using boundary-distance approximations and shadow models, then train an inversion network on these estimates.
LOKT~\cite{NEURIPS2023_d9827e81} transfers knowledge from target labels to surrogate models and subsequently performs white-box inversion against the surrogates.
Liu et al.~\cite{10458692} use predicted labels to train a conditional diffusion model for recovering representative samples of target identities.
These attacks do not require target-model gradients, although they may use gradients of surrogate models.

Face recognition has served as a prominent benchmark across these attack settings, although reconstruction attacks have also been studied for other biometric modalities~\cite{Yan_2024,Ahmad_2022}.
We include palmprint and iris recognition, as palmprint line patterns and fine-grained iris textures differ from facial cues and may affect quantization responses and inversion outcomes.
For input reconstruction, we separately assume access to input-specific prediction vectors or intermediate features, which are unavailable through the label-only interface.

Prominent defenses intervene during model training.
MID~\cite{Wang_2021} regularizes mutual information between inputs and predictions.
BiDO~\cite{Peng_2022} discourages dependence between inputs and learned representations while encouraging dependence between representations and labels.
Related approaches include record-level privacy protection through DP-SGD~\cite{Abadi_2016}, training-data augmentation~\cite{Yi_2026}, and adversarial representation learning against reconstruction~\cite{Xiao_2020}.
In collaborative inference, SiftFunnel~\cite{11537378} constrains input--feature dependence, inter-channel correlation, and feature magnitudes during training.
SSD~\cite{ICLR2025_1d8e261c} instead post-processes prediction probabilities without retraining the target model and can be combined with our PTQ method.
Our study starts from an already trained model and focuses on how PTQ affects its vulnerability to MIAs: quantization configurations and parameters may be adjusted, while the underlying network weights are not retrained.
Accordingly, integrating training-time defenses into the target model's training process falls outside our scope.

\subsection{Quantization and Privacy Risks}
\label{subsec:quantization_privacy}
Prior work has examined the privacy implications of model compression, including quantization.
CompLeak~\cite{317957} evaluates several compression techniques and shows that combining information from original and compressed models can strengthen membership inference.
Aubinais et al.~\cite{pmlr-v300-aubinais26a} derive an indicator of membership inference security for PTQ through asymptotic analysis, enabling comparisons between quantization procedures.
These studies demonstrate that quantization can affect privacy, but membership inference concerns whether a sample participated in training and does not directly address the recovery of sensitive content.

Prior work in federated learning and collaborative inference more directly addresses content recovery.
Ovi et al.~\cite{Ovi_2023_CVPR} apply mixed quantization to shared gradients during federated training to mitigate attacks that aim to reconstruct local training data from transmitted gradient observations.
Delta~\cite{Niu_2024_CVPR} retains the low-dimensional, information-sensitive component of intermediate representations in a private environment, while applying random binary quantization and a differential privacy mechanism to the residuals before transmission to a public environment.
An energy-aware DNN partitioning method~\cite{Chaopeng_2023} combines partition selection with mixed-precision model quantization under energy constraints, using discrepancies between input images and projected feature maps as a privacy proxy.

These studies do not yet provide a systematic account of how weight and activation PTQ affect inversion risk from the predictions of pretrained image models, particularly across data domains and architectures.
We investigate this relationship through theoretical analysis and empirical evaluation, using the resulting understanding to guide privacy-aware PTQ that balances predictive utility and inversion resistance.

\section{Information-Theoretic Relationship between Model Quantization and Model Inversion}
\label{sec:information_theory}

This section examines the relationship between model quantization and MIAs, including input reconstruction and identity recovery.
We develop an information-theoretic perspective and analyze changes in categorical prediction information, hard-label stability, and representation geometry.

\subsection{Information Recoverability under Model Quantization}
\label{subsec:information_recoverability}

MIAs seek to reconstruct a specific input or recover representative samples of a target identity.
To make the role of available information precise, we first consider reconstruction of an unknown input $X\sim P_X$.
Let $O$ collect the model observations and auxiliary information available to the attacker, including the full query--response transcript for adaptive attacks.
A possibly randomized attack produces $\widehat X$ from $O$, yielding the Markov chain $X\rightarrow O\rightarrow\widehat X$.

Let $d$ measure reconstruction distortion and $R_X(D)$ denote the corresponding rate--distortion function.
For any attack satisfying $\mathbb E[d(X,\widehat X)]\le D$, rate--distortion theory and the data processing inequality give~\cite{Cover_2005}
\begin{equation}
R_X(D)\le I(X;\widehat X)\le I(X;O).
\label{eq:inversion_information_requirement}
\end{equation}
The available information therefore constrains achievable reconstruction fidelity, while the attack determines how effectively that information is used.
This bound concerns individual-input reconstruction.
Identity recovery uses model information to generate representative samples without requiring proximity to a particular original input.

Applying a fixed quantizer $Q$ to an existing feature, gradient, or transmitted vector $V$ cannot increase its information about the input: $I(X;Q(V))\le I(X;V)$.
However, quantizing a model's weights and internal activations changes the computation that produces its predictions.
The quantized model's predictions need not be obtainable from the original predictions alone.
The data processing inequality therefore does not imply that model quantization reduces the information in predictions.

Accordingly, we compare both models under the same input distribution, auxiliary information, and access setting.
We then relate local quantization errors to categorical prediction information, hard-label stability, and representation geometry.

\subsection{Quantization-Induced Changes in Prediction Information}
\label{subsec:prediction_information}
We analyze prediction information through auxiliary categorical variables under the common input distribution $P_X$.
Let $f_q$ denote the model obtained from $f$ using a fixed quantization configuration $q$, which may assign different bit widths and scales to individual quantization sites.
Both models have $M\ge2$ classes; write $\mathbf p(x)=\mathbf p_f(x)$ and $\mathbf p_q(x)=\mathbf p_{f_q}(x)$.
Define the auxiliary variables by $C\mid X=x\sim\operatorname{Cat}(\mathbf p(x))$ and $C_q\mid X=x\sim\operatorname{Cat}(\mathbf p_q(x))$, and let $\Delta I_q=I(X;C_q)-I(X;C)$.
We reserve $Y$ for the ground-truth class.
Here, larger $I(X;C)$ indicates stronger statistical dependence between inputs and the auxiliary prediction class $C$.
The bounds concern $C,C_q$, not the full score vector or attacker view $O$.
All logarithms are natural, and information is measured in nats.

To connect prediction changes with information changes, define the pointwise total variation and its input average as
\begin{equation}
\delta_q(x)=\tfrac12\|\mathbf p_q(x)-\mathbf p(x)\|_1,
\qquad \varepsilon_q=\mathbb E_X\delta_q(X).
\label{eq:prediction_tv}
\end{equation}
Here, $\delta_q(x)=0$ means identical probability vectors at $x$; larger values indicate stronger perturbations.
We next use entropy continuity to translate these probability differences into bounds on entropy changes.
Let $h_2(t)=-t\log t-(1-t)\log(1-t)$, with $0\log0=0$.
For $t\in[0,1]$, introduce the nondecreasing, concave function
\begin{equation}
\Phi_M(t)=\begin{cases}
h_2(t)+t\log(M-1),&t\le1-M^{-1},\\
\log M,&t>1-M^{-1}.
\end{cases}
\label{eq:entropy_continuity_envelope}
\end{equation}
The Fannes--Audenaert inequality~\cite{Audenaert_2007} implies $|\mathrm H(r)-\mathrm H(s)|\le\Phi_M(\operatorname{TV}(r,s))$ for two distributions on the same $M$ classes.
The constant branch in~\eqref{eq:entropy_continuity_envelope} keeps $\Phi_M$ nondecreasing, allowing an upper bound on total variation to replace its exact value.
Its concavity allows Jensen's inequality to bound the expected entropy change in terms of $\varepsilon_q$.

\begin{lemma}[Prediction-information continuity]
\label{lem:prediction_continuity}
Under the shared input distribution and class space,
\begin{equation}
|\Delta I_q|\le\min\{\log M,\,2\Phi_M(\varepsilon_q)\}.
\label{eq:prediction_mi_continuity}
\end{equation}
\end{lemma}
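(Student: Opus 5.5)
We write $I(X;C)=\mathrm H(C)-\mathrm H(C\mid X)$ and $I(X;C_q)=\mathrm H(C_q)-\mathrm H(C_q\mid X)$, and bound the two entropy differences separately.

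\textbf{The trivial bound.} Since $0\le I(X;C)\le \mathrm H(C)\le\log M$, and likewise for $C_q$, the difference of two numbers in $[0,\log M]$ satisfies $|\Delta I_q|\le\log M$. This gives the first term of the minimum.

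\textbf{Decomposition.} By the triangle inequality,
\begin{equation*}
|\Delta I_q|\le|\mathrm H(C_q)-\mathrm H(C)|+|\mathrm H(C_q\mid X)-\mathrm H(C\mid X)|.
\end{equation*}

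\textbf{Conditional term.} We have $\mathrm H(C\mid X)=\mathbb E_X\mathrm H(\mathbf p(X))$, and similarly for $C_q$. Hence
\begin{equation*}
|\mathrm H(C_q\mid X)-\mathrm H(C\mid X)|\le\mathbb E_X|\mathrm H(\mathbf p_q(X))-\mathrm H(\mathbf p(X))|\le\mathbb E_X\Phi_M(\delta_q(X))\le\Phi_M(\varepsilon_q).
\end{equation*}
The middle step applies Fannes--Audenaert pointwise, with $\operatorname{TV}=\delta_q(x)$. The last step is Jensen's inequality, using concavity of $\Phi_M$ on $[0,1]$.

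\textbf{Marginal term.} The marginals are $P_C=\mathbb E_X\mathbf p(X)$ and $P_{C_q}=\mathbb E_X\mathbf p_q(X)$. Convexity of the $\ell_1$ norm gives
\begin{equation*}
\operatorname{TV}(P_{C_q},P_C)\le\mathbb E_X\delta_q(X)=\varepsilon_q.
\end{equation*}
Fannes--Audenaert, together with monotonicity of $\Phi_M$ (which is exactly why the constant branch is included), then yields $|\mathrm H(C_q)-\mathrm H(C)|\le\Phi_M(\varepsilon_q)$. Adding the two terms gives $2\Phi_M(\varepsilon_q)$, and combining with the trivial bound gives the minimum.

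\textbf{Main obstacle.} The delicate point is verifying that the piecewise $\Phi_M$ is genuinely concave and nondecreasing on all of $[0,1]$, since both Jensen's inequality and the substitution of an upper bound rely on it. On $[0,1-M^{-1}]$, the function $h_2(t)+t\log(M-1)$ is concave. Its derivative $\log\frac{1-t}{t}+\log(M-1)$ is nonnegative there and vanishes exactly at $t=1-M^{-1}$. At that point the function attains the value
\begin{equation*}
h_2(1-M^{-1})+(1-M^{-1})\log(M-1)=\log M.
\end{equation*}
So the continuation by the constant $\log M$ is $C^1$ at the junction, which makes the piecewise function concave and nondecreasing on $[0,1]$. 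A secondary technicality is measurability and integrability when $X$ is continuous. This is harmless, because all entropies involved are bounded by $\log M$.
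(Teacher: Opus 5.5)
Your proposal is correct and matches the paper's proof step by step. It uses the same split $I(X;C)=\mathrm H(C)-\mathrm H(C\mid X)$, then pointwise Fannes--Audenaert followed by Jensen for the conditional term, the convexity (triangle) bound $\operatorname{TV}(\bar{\mathbf p}_q,\bar{\mathbf p})\le\varepsilon_q$ plus monotonicity of $\Phi_M$ for the marginal term, and the $[0,\log M]$ range for the cap. Your explicit check that $\Phi_M$ is concave and nondecreasing (derivative and value matching at $t=1-M^{-1}$) is the verification the paper defers to its Appendix~A.
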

\begin{IEEEproof}
Since $I(X;C)=\mathrm H(C)-\mathrm H(C\mid X)$, we bound the marginal and conditional entropy changes separately.
The marginal class distributions are $\bar{\mathbf p}=\mathbb E_X\mathbf p(X)$ and $\bar{\mathbf p}_q=\mathbb E_X\mathbf p_q(X)$.
The triangle inequality gives $\operatorname{TV}(\bar{\mathbf p}_q,\bar{\mathbf p})\le\varepsilon_q$, so their entropies differ by at most $\Phi_M(\varepsilon_q)$.
For conditional entropies, the bound and Jensen's inequality give
\begin{equation}
\begin{aligned}
|\mathrm H(C_q\mid X)-\mathrm H(C\mid X)|
\le\mathbb E_X\Phi_M(\delta_q(X))
\le\Phi_M(\varepsilon_q).
\end{aligned}
\label{eq:conditional_entropy_continuity}
\end{equation}
Combining the two entropy bounds yields the factor $2$ in~\eqref{eq:prediction_mi_continuity}.
The additional cap follows because both mutual informations lie in $[0,\log M]$.
Appendix~A of the supplementary material gives every step, including the properties of $\Phi_M$.
\end{IEEEproof}

\noindent\textbf{Key takeaway.} If quantization leaves prediction probabilities nearly unchanged, the corresponding categorical mutual information also remains nearly unchanged.

It remains to relate $\varepsilon_q$ to quantization inside the network.
We use the dequantized real-valued representation in~\eqref{eq:uniform_quantization}, and order the quantization sites as $\mathcal S=\{1,\ldots,N\}$.
At each site, $e_{k,q}(x)$ denotes the output difference introduced by quantizing that site while keeping its incoming computation identical.
This isolates the site's new error from previously accumulated perturbations.

For an activation site, let $u_k$ be its pre-quantization tensor along the quantized prefix, $\mathcal C_{k,q}$ its componentwise clipping to the representable interval, and $\boldsymbol s_{k,q}$ the positive quantization steps broadcast to its coordinates.
For a coordinate with step $s$, the interval is $[s q_{\min},s q_{\max}]$, consistently with~\eqref{eq:uniform_quantization}.
Writing $u_k^{c}=\mathcal C_{k,q}(u_k)$ separates the rounding and clipping errors:
\begin{equation}
e_{k,q}^{A}=\mathcal Q_{k,q}(u_k)-u_k
=[\mathcal Q_{k,q}(u_k)-u_k^{c}]+[u_k^{c}-u_k].
\label{eq:local_activation_decomposition}
\end{equation}
Nearest-neighbor rounding contributes at most half a step per coordinate.
With $\omega_{k,q}(x)=\|u_k(x)-u_k^{c}(x)\|_2$, the triangle inequality therefore gives
\begin{equation}
\|e_{k,q}^{A}(x)\|_2
\le\tfrac12\|\boldsymbol s_{k,q}\|_2+\omega_{k,q}(x).
\label{eq:local_activation_bound}
\end{equation}
For a tensorwise step $s_{k,q}$ and $d_k$ coordinates, the first term equals $s_{k,q}\sqrt{d_k}/2$.

For an affine module $W_ka+c_k$, let $\Delta W_{k,q}$ and $\Delta c_{k,q}$ denote its parameter changes.
Evaluating both versions on the same prefix activation $a_{k-1}(x)$ gives
\begin{equation}
e_{k,q}^{W}(x)=\Delta W_{k,q}a_{k-1}(x)+\Delta c_{k,q},
\label{eq:local_weight_error}
\end{equation}
\begin{equation}
\|e_{k,q}^{W}(x)\|_2
\le\|\Delta W_{k,q}\|_{\mathrm{op}}\|a_{k-1}(x)\|_2
+\|\Delta c_{k,q}\|_2.
\label{eq:local_weight_bound}
\end{equation}
This expression accommodates learned weight-rounding decisions without assuming nearest-neighbor rounding.
Both $e_{k,q}^{W}$ and $e_{k,q}^{A}$ measure a local output difference with identical incoming computation, so we denote either by $e_{k,q}$ and treat them as separate sites in computational order.
When activation quantization follows a weight-quantized operation, its error is evaluated at the output produced using quantized weights.
Appendix~B provides the coordinatewise derivation and scalar-step special case.

\begin{lemma}[Propagation of local quantization errors]
\label{lem:quantization_propagation}
Suppose the common unquantized suffix after each site $k$ is $L_k^\downarrow$-Lipschitz in the Euclidean norm over all relevant prefix states and their local perturbations.
Then
\begin{equation}
\|\mathbf p_q(x)-\mathbf p(x)\|_2
\le\sum_{k\in\mathcal S}L_k^\downarrow\|e_{k,q}(x)\|_2.
\label{eq:local_to_output}
\end{equation}
For residual or branched networks, unchanged branch states are held fixed in each comparison; the constants must be valid uniformly over these shared states.
\end{lemma}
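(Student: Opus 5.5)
The proof uses a telescoping (hybrid) argument over the ordered quantization sites $\mathcal S=\{1,\ldots,N\}$. For $k=0,1,\ldots,N$, let $\mathbf p^{(k)}(x)$ denote the output probability vector of the hybrid network in which sites $1,\ldots,k$ are quantized and sites $k+1,\ldots,N$ are left at full precision. Then $\mathbf p^{(0)}=\mathbf p$ and $\mathbf p^{(N)}=\mathbf p_q$. Writing
\begin{equation*}
\mathbf p_q(x)-\mathbf p(x)=\sum_{k=1}^{N}\bigl[\mathbf p^{(k)}(x)-\mathbf p^{(k-1)}(x)\bigr]
\end{equation*}
and applying the triangle inequality reduces the claim to showing $\|\mathbf p^{(k)}(x)-\mathbf p^{(k-1)}(x)\|_2\le L_k^\downarrow\|e_{k,q}(x)\|_2$ for each $k$.

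To bound a single summand, note that hybrids $k-1$ and $k$ share the same prefix computation up to the input of site $k$: sites $1,\ldots,k-1$ are quantized in both. They also share the same unquantized suffix after site $k$, which is the common suffix in the hypothesis. The only difference is that hybrid $k$ applies the quantized version of site $k$ to that common prefix state. By the definition of $e_{k,q}(x)$ (local output difference with identical incoming computation, evaluated on the quantized prefix as stipulated before the lemma), the state entering the suffix differs by exactly $e_{k,q}(x)$. For an activation site this follows from \eqref{eq:local_activation_decomposition}; for a weight site it follows from \eqref{eq:local_weight_error}, with the activation error evaluated at outputs produced using quantized weights. Both states lie in the set of "relevant prefix states and their local perturbations," so the assumed $L_k^\downarrow$-Lipschitz property of the suffix gives the desired per-site bound. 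Summing over $k$ yields \eqref{eq:local_to_output}.

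The main obstacle is making the hybrid ordering and the suffix definition consistent, so that the suffix after site $k$ is truly unquantized in both hybrids. This is what forces the sum to run in computational order, with later sites still at full precision. In residual or branched networks, the suffix map at site $k$ also takes as input other branch states, such as skip connections computed from earlier sites. In both hybrids these states are identical, since they derive from the same quantized prefix. I would therefore treat them as fixed parameters of the suffix and invoke the Lipschitz constant uniformly over them, exactly as the lemma's final sentence permits. A shared branch feeding several later paths is handled because each comparison perturbs only site $k$'s output. If an ordering issue arises with parallel branches, the remedy is to choose any topological order of sites and define the suffix accordingly; the argument is unchanged.
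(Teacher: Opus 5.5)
Your proposal is correct and follows essentially the same route as the paper's proof. You use the $N+1$ hybrids with the first $j$ sites quantized, the telescoping sum from $\mathbf p^{(0)}=\mathbf p$ to $\mathbf p^{(N)}=\mathbf p_q$, the observation that adjacent hybrids share the quantized prefix and differ at the suffix input by exactly $e_{k,q}(x)$, and the triangle inequality; your handling of branch states as fixed suffix inputs, with Lipschitz constants uniform over them, matches the lemma's stated convention for residual and branched networks.
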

\begin{IEEEproof}
Construct $N+1$ hybrid models whose first $j$ sites are quantized and whose remaining sites retain the original computation, and denote their probability outputs by $\mathbf p^{(j)}$.
Then $\mathbf p^{(0)}=\mathbf p$, $\mathbf p^{(N)}=\mathbf p_q$, and
\begin{equation}
\mathbf p_q(x)-\mathbf p(x)
=\sum_{j=1}^{N}[\mathbf p^{(j)}(x)-\mathbf p^{(j-1)}(x)].
\label{eq:hybrid_telescoping}
\end{equation}
Two adjacent hybrids have the same quantized prefix and differ only at site $j$.
Their local output difference is therefore exactly $e_{j,q}(x)$, and their shared suffix maps it to an output difference of norm at most $L_j^\downarrow\|e_{j,q}(x)\|_2$.
Applying the triangle inequality to~\eqref{eq:hybrid_telescoping} proves~\eqref{eq:local_to_output}.
The full state-based construction appears in Appendix~C.
\end{IEEEproof}

Assume the local errors have finite first moments.
Taking expectations in~\eqref{eq:local_to_output} and using $\|v\|_1\le\sqrt M\|v\|_2$ yields the output perturbation budget
\begin{equation}
\varepsilon_q\le\mathcal G_q,
\qquad
\mathcal G_q=\frac{\sqrt M}{2}
\sum_{k\in\mathcal S}L_k^\downarrow\mathbb E_X\|e_{k,q}(X)\|_2.
\label{eq:output_perturbation_budget}
\end{equation}
Local errors reflect quantization ranges, clipping tails, and tensor dimensions; their means depend on $P_X$.
The $L_k^\downarrow$ must be valid upper bounds.

\noindent\textbf{Key takeaway.} The bound on quantization-induced prediction changes depends on both the model's error propagation and the input distribution.

\begin{theorem}[End-to-end prediction-information bound]
\label{thm:quantization_information}
Under the common input distribution and class space, the propagation conditions of Lemma~\ref{lem:quantization_propagation}, and finite mean local errors,
\begin{equation}
|\Delta I_q|\le
\min\{\log M,\,2\Phi_M(\min\{1,\mathcal G_q\})\}.
\label{eq:main_information_bound}
\end{equation}
If $\mathcal G_q\le1-M^{-1}$, this gives the explicit bound
\begin{equation}
|\Delta I_q|\le
2[\mathcal G_q\log(M-1)+h_2(\mathcal G_q)].
\label{eq:main_information_bound_small}
\end{equation}
\end{theorem}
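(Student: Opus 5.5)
The plan is to chain Lemma~\ref{lem:prediction_continuity} with the output perturbation budget~\eqref{eq:output_perturbation_budget}. Lemma~\ref{lem:quantization_propagation} applies under the stated propagation conditions. With finite mean local errors, taking expectations and converting norms gives $\varepsilon_q\le\mathcal G_q$. Independently, $\varepsilon_q\le1$ holds because $\delta_q(x)$ is a total variation distance between probability vectors and therefore lies in $[0,1]$ pointwise. Hence $\varepsilon_q\le\min\{1,\mathcal G_q\}$.

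Next, I substitute this into Lemma~\ref{lem:prediction_continuity}. The lemma is stated with the exact $\varepsilon_q$, so I need the monotonicity of $\Phi_M$ on $[0,1]$ to replace $\varepsilon_q$ by the upper bound $\min\{1,\mathcal G_q\}$. This is exactly why the constant branch in~\eqref{eq:entropy_continuity_envelope} was introduced. The step I would check most carefully is that $\Phi_M$ is nondecreasing on all of $[0,1]$, and in particular continuous at the junction $t=1-M^{-1}$:
\begin{itemize}
\item On $[0,1-M^{-1}]$, the derivative is $\log((1-t)/t)+\log(M-1)$. This equals $\log\bigl((M-1)(1-t)/t\bigr)$, which is nonnegative precisely when $t\le1-M^{-1}$.
\item At $t=1-M^{-1}$, the value is $h_2(1-M^{-1})+(1-M^{-1})\log(M-1)=\log M$ by direct computation, so the two branches meet continuously.
\end{itemize}
Since the paper already asserts these properties (with details in Appendix~A), I can cite them. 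The truncation at $1$ also keeps the argument of $\Phi_M$ inside its domain $[0,1]$. Taking the minimum with the trivial cap $\log M$ then yields~\eqref{eq:main_information_bound}.

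For the explicit form~\eqref{eq:main_information_bound_small}, assume $\mathcal G_q\le1-M^{-1}$. Since $1-M^{-1}<1$, we get $\min\{1,\mathcal G_q\}=\mathcal G_q$, and this value lies in the first branch of $\Phi_M$. So $2\Phi_M(\mathcal G_q)=2[h_2(\mathcal G_q)+\mathcal G_q\log(M-1)]$. Dropping the outer minimum only weakens the bound, which gives the displayed inequality. No step is a real obstacle; the only care needed is the monotonicity and domain argument above.
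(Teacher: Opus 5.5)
Your proposal is correct and follows the paper's proof essentially step for step: you obtain $\varepsilon_q\le\min\{1,\mathcal G_q\}$ from the output perturbation budget together with $\operatorname{TV}\le1$, substitute into Lemma~\ref{lem:prediction_continuity} using that $\Phi_M$ is nondecreasing, and read off the first branch when $\mathcal G_q\le1-M^{-1}$. Your derivative and junction-continuity checks for $\Phi_M$ are accurate; they spell out the properties of $\Phi_M$ that the paper defers to its appendix.
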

\begin{IEEEproof}
Since total variation is at most one,~\eqref{eq:output_perturbation_budget} implies $\varepsilon_q\le\min\{1,\mathcal G_q\}$.
Substituting this into Lemma~\ref{lem:prediction_continuity} is valid because $\Phi_M$ is nondecreasing.
Its first branch gives~\eqref{eq:main_information_bound_small} under the stated condition.
Appendix~D gives the full derivation.
\end{IEEEproof}

The general bound includes weight, activation, and mixed-precision quantization through their local errors.
For activation-only quantization with tensorwise nearest-neighbor steps,~\eqref{eq:local_activation_bound} separates the step-size and clipping contributions:
\begin{equation}
\mathcal G_q\le\mathcal E_q^{A}
=\frac{\sqrt M}{2}\sum_{k\in\mathcal S}L_k^\downarrow
\left[\frac{s_{k,q}\sqrt{d_k}}2+\mathbb E_X\omega_{k,q}(X)\right].
\label{eq:activation_output_budget}
\end{equation}
Replacing $\mathcal G_q$ by $\mathcal E_q^{A}$ in~\eqref{eq:main_information_bound} gives its explicit activation-quantization specialization.
For a bit-indexed family $q_b$, write $\varepsilon_b=\varepsilon_{q_b}$ and $\Delta I_b=\Delta I_{q_b}$.
If its layerwise steps and clipping residuals additionally admit fixed common-scale bounds, these terms can be collected into
\begin{equation}
\varepsilon_b\le E_b=K_\Delta\Delta_b+K_{\mathrm{clip}}\Omega_b.
\label{eq:compact_quantization_budget}
\end{equation}
Here $\Delta_b$ and $\Omega_b$ are reference step and clipping scales, and the fixed coefficients aggregate layer dimensions and propagation bounds.
For $E_b\le1-M^{-1}$,
\begin{equation}
|\Delta I_b|\le2[E_b\log(M-1)+h_2(E_b)].
\label{eq:compact_information_bound}
\end{equation}
Appendix~D states the additional factorization assumptions; they are unnecessary for the general result~\eqref{eq:main_information_bound}.

Figure~\ref{fig:quantization_information_bounds}(a) illustrates an idealized fixed-range setting with $\Delta_b=2/(2^b-1)$, $K_\Delta=1$, and no clipping.
The bound decays as $O(b2^{-b})$ at high bit widths (Appendix~D).

\noindent\textbf{Key takeaway.} In the illustrated setting, the theoretical upper bound on quantization-induced changes in categorical mutual information is smaller at 8 and 6 bits than at 4 bits.
This contrast motivates our initial focus on 4-bit quantization, whose privacy--utility trade-off we then examine empirically.

We complement these bounds with empirical estimates of the same categorical mutual information from Palm--ResNet-50 predictions (Appendix~E).
On test inputs, estimates change little at 8 and 6 bits, but at 4 bits decrease from $3.769$ nats for FP32 to $3.254$ under weight-only quantization and $2.656$ under joint weight--activation quantization.
Weight quantization need not reduce prediction information.

\begin{figure}[!t]
\centering
\includegraphics[width=\columnwidth]{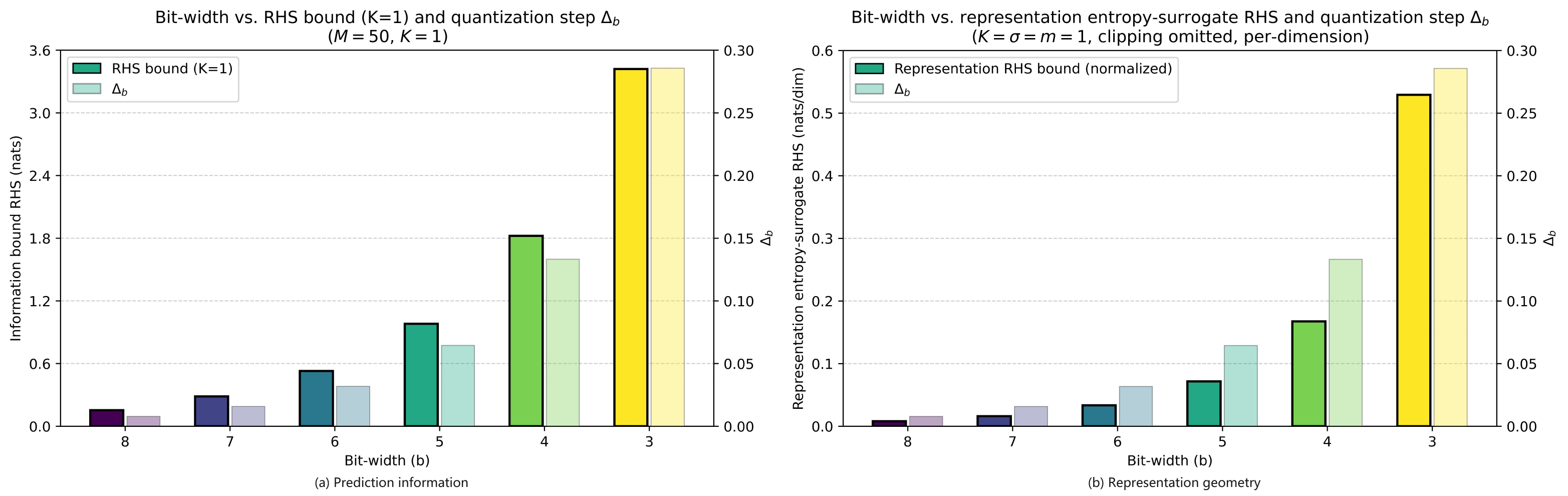}
\caption{\textbf{Illustrative bit-width dependence of information-related bounds.} Both panels use $\Delta_b=2/(2^b-1)$ for an idealized grid containing both endpoints of $[-1,1]$, with clipping omitted and propagation factors normalized. This grid differs from~\eqref{eq:uniform_quantization}. (a) The categorical information bound~\eqref{eq:compact_information_bound} with $M=50$ and $K_\Delta=1$. (b) The per-dimension form of~\eqref{eq:representation_entropy_bound}, $-\frac12\log[1-(2\Delta_b+\Delta_b^2)]$, with unit feature-spread and regularized covariance-floor parameters and representation error bounded by $\Delta_b$. The required condition $2\Delta_b+\Delta_b^2<1$ holds throughout. Full derivations appear in Appendices~D and~G.}
\label{fig:quantization_information_bounds}
\end{figure}

Label-only attacks observe the deterministic top class, which can remain unchanged despite changes in prediction probabilities.
Using the fixed tie-breaking rule in Sec.~\ref{subsec:model_inversion}, define $H=h_f(X)$, $H_q=h_{f_q}(X)$, and
\begin{equation}
\eta_q=\Pr[H_q\ne H].
\label{eq:label_disagreement}
\end{equation}
A larger $\eta_q$ means labels change more often after quantization, irrespective of their correctness.

\begin{corollary}[Hard-label information stability]
\label{cor:hard_label_information}
The information change at the hard-label interface satisfies
\begin{equation}
|I(X;H_q)-I(X;H)|\le\Phi_M(\eta_q).
\label{eq:hard_label_information_bound}
\end{equation}
\end{corollary}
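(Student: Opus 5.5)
Because $H=h_f(X)$ and $H_q=h_{f_q}(X)$ are deterministic functions of $X$ (the tie-breaking rule is fixed), the conditional entropies vanish: $\mathrm H(H\mid X)=\mathrm H(H_q\mid X)=0$. Hence $I(X;H)=\mathrm H(H)$ and $I(X;H_q)=\mathrm H(H_q)$. The quantity to bound therefore reduces to $|\mathrm H(H_q)-\mathrm H(H)|$, a difference of entropies of two distributions on the same $M$ classes. This is why the bound carries $\Phi_M$ only once, rather than the factor $2$ of Lemma~\ref{lem:prediction_continuity}: the conditional-entropy term that produced the second copy of $\Phi_M$ there is identically zero here.

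Next, I will bound the total variation between the marginal label distributions $\mathbf r$ (law of $H$) and $\mathbf r_q$ (law of $H_q$) by $\eta_q$ via the coupling inequality. Both labels are defined on the same probability space through $X$, so for every class $m$,
\[
|\Pr[H_q=m]-\Pr[H=m]|\le\Pr[H_q=m,H\ne m]+\Pr[H=m,H_q\ne m].
\]
Summing over $m$ and halving gives $\operatorname{TV}(\mathbf r_q,\mathbf r)\le\Pr[H_q\ne H]=\eta_q$. Equivalently, one can write $H,H_q$ as categorical variables with one-hot vectors $\mathbf e_{h_f(x)}$ and $\mathbf e_{h_{f_q}(x)}$. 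The pointwise total variation is then the indicator $\mathbf 1[h_{f_q}(x)\ne h_f(x)]$, whose mean is $\eta_q$. The triangle-inequality step from the proof of Lemma~\ref{lem:prediction_continuity} then applies verbatim to the marginals.

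Finally, I will apply the Fannes--Audenaert inequality in the form $|\mathrm H(r)-\mathrm H(s)|\le\Phi_M(\operatorname{TV}(r,s))$. Because $\Phi_M$ is nondecreasing (the constant branch of~\eqref{eq:entropy_continuity_envelope} guarantees this), I can replace the exact total variation by its upper bound $\eta_q$, which yields~\eqref{eq:hard_label_information_bound}. There is no real obstacle. The only point needing care is the first step: the bound must not be obtained by directly specializing Lemma~\ref{lem:prediction_continuity}, which would lose the factor $2$, so the determinism of $H$ and $H_q$ given $X$ has to be invoked explicitly.
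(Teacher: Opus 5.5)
Your proposal is correct and follows the paper's proof essentially step for step. Determinism of the labels given $X$ reduces both mutual informations to marginal entropies, which removes the factor $2$ of Lemma~\ref{lem:prediction_continuity}; the same-input coupling then gives $\operatorname{TV}(P_H,P_{H_q})\le\eta_q$, and the Fannes--Audenaert bound together with monotonicity of $\Phi_M$ completes the argument.
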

\begin{IEEEproof}
Deterministic predictions satisfy $\mathrm H(H\mid X)=\mathrm H(H_q\mid X)=0$, so their mutual informations with $X$ equal their marginal entropies.
The two labels evaluated on the same input provide a coupling for which $\operatorname{TV}(P_H,P_{H_q})\le\Pr[H\ne H_q]=\eta_q$.
Applying the entropy-continuity bound proves~\eqref{eq:hard_label_information_bound}.
Only the marginal-entropy term remains, so the factor $2$ in~\eqref{eq:prediction_mi_continuity} is unnecessary here.
Appendix~F gives the full proof.
\end{IEEEproof}

To relate $\eta_q$ to quantization, let $m(x)=p_{(1)}(x)-p_{(2)}(x)$ be the difference between the largest and second-largest full-precision probabilities, and write $r_q(x)=\|\mathbf p_q(x)-\mathbf p(x)\|_\infty$.
If the original top class is unique and $r_q(x)<m(x)/2$, every competing class remains below it, since its probability gap after perturbation is at least $m(x)-2r_q(x)>0$.
Consequently,
\begin{equation}
h_{f_q}(x)\ne h_f(x)\quad\Longrightarrow\quad m(x)\le2r_q(x).
\label{eq:margin_label_stability}
\end{equation}
This implication also holds when $m(x)=0$, although it then gives no positive stability margin.
Failure of the sufficient condition $r_q(x)<m(x)/2$ does not imply a label flip.
For any $t>0$, a label flip lies in $\{m(X)\le2t\}$ or $\{r_q(X)\ge t\}$.
The union bound and Markov's inequality give
\begin{equation}
\begin{aligned}
\eta_q\le\Pr[m(X)\le2t]+\frac{\mathbb E r_q(X)}{t}
\le\Pr[m(X)\le2t]+\frac{\varepsilon_q}{t},
\end{aligned}
\label{eq:margin_disagreement_budget}
\end{equation}
The probability difference sums to zero, so $r_q(x)\le\delta_q(x)$; taking expectations gives the second inequality.
Combining~\eqref{eq:output_perturbation_budget} and~\eqref{eq:margin_disagreement_budget}, and optimizing over $t$, gives
\begin{equation}
\eta_q\le\bar\eta_q:=\min\left\{1,\inf_{t>0}\left[\Pr[m(X)\le2t]+\frac{\mathcal G_q}{t}\right]\right\}.
\label{eq:quantized_label_disagreement_bound}
\end{equation}
Monotonicity of $\Phi_M$ then gives $|I(X;H_q)-I(X;H)|\le\Phi_M(\bar\eta_q)$.
Appendix~F details the event inclusions and inequalities.
The bounds concern single predictions under $P_X$; adaptive multi-query MIAs are assessed experimentally.

\noindent\textbf{Key takeaway.} A hard label is preserved when every class probability changes by less than half the original prediction margin.
Across inputs, the information-stability bound depends on both prediction perturbations and the proportion of inputs with small prediction margins.

\subsection{Representation Geometry under Quantization}
\label{subsec:representation_geometry}
The preceding bounds characterize changes at the prediction interface.
We now examine intermediate representations through their covariance, which captures variation across inputs.
Since these representations need not share a finite alphabet, we use a regularized Gaussian entropy surrogate to quantify changes in their second-order geometry.

Fix a representation layer of dimension $d$ and configuration $q$.
Under the common input law $P_X$, let $Z=Z(X)$ and $Z_q=Z_q(X)$ be the full-precision and quantized representations in $\mathbb R^d$, both with finite second moments.
Define their difference and its centered scale by
\begin{equation}
E=Z_q-Z,\qquad
\tau^2=\mathbb E\|E-\mathbb EE\|_2^2.
\label{eq:representation_error_scale}
\end{equation}
Covariance is unchanged by a common shift of all feature vectors, so we center $E$ before measuring its effect.
Let $\mathcal S_Z$ contain all quantization sites used to compute the selected representation, including any quantizer applied at that output.
For each site, let $L_{k\to Z}$ be a valid Euclidean Lipschitz bound for the common suffix to the selected layer, with unchanged branch states held fixed as in Lemma~\ref{lem:quantization_propagation}. A quantizer at the selected output has an identity suffix.
The same hybrid construction gives $\|E(x)\|_2\le\sum_{k\in\mathcal S_Z}L_{k\to Z}\|e_{k,q}(x)\|_2$.
If the local errors have finite second moments, centering and the $L^2$ triangle inequality yield
\begin{equation}
\tau\le\bigl(\mathbb E\|E\|_2^2\bigr)^{1/2}
\le\sum_{k\in\mathcal S_Z}L_{k\to Z}
\bigl(\mathbb E\|e_{k,q}(X)\|_2^2\bigr)^{1/2}.
\label{eq:representation_error_propagation}
\end{equation}
A few large errors can contribute strongly to second moments even when most inputs incur little error.
To bound covariance changes, we therefore use the second moments in~\eqref{eq:representation_error_propagation}; the first-moment quantities in~\eqref{eq:output_perturbation_budget} alone do not suffice.

Let $\Sigma=\operatorname{Cov}(Z)$, $\Sigma_q=\operatorname{Cov}(Z_q)$, $\Sigma_E=\operatorname{Cov}(E)$, and $\Gamma_{ZE}=\operatorname{Cov}(Z,E)$.
The identity $Z_q=Z+E$ yields
\begin{equation}
\Delta\Sigma:=\Sigma_q-\Sigma=\Sigma_E+\Gamma_{ZE}+\Gamma_{ZE}^\top.
\label{eq:representation_covariance_decomposition}
\end{equation}
Set $\sigma=\sqrt{\|\Sigma\|_{\mathrm{op}}}$ as the spectral feature scale.
The error covariance satisfies $\|\Sigma_E\|_{\mathrm{op}}\le\operatorname{tr}(\Sigma_E)=\tau^2$.
For unit vectors $u,v$, Cauchy--Schwarz gives $|u^\top \Gamma_{ZE} v|\le\sqrt{u^\top\Sigma u}\sqrt{v^\top\Sigma_Ev}\le\sigma\tau$, hence $\|\Gamma_{ZE}\|_{\mathrm{op}}\le\sigma\tau$.
Applying the operator-norm triangle inequality to~\eqref{eq:representation_covariance_decomposition} therefore yields
\begin{equation}
\|\Delta\Sigma\|_{\mathrm{op}}\le2\sigma\tau+\tau^2.
\label{eq:representation_covariance_bound}
\end{equation}
Along a feature direction, the cross terms measure whether centered errors reinforce or oppose the original deviations from the mean.
They can therefore add to or offset the variance contributed by the errors themselves.
Neither independence nor a zero error mean is assumed.
Appendix~G, Steps~1--4, gives the full derivation.

\noindent\textbf{Key takeaway.} Quantization errors of the same magnitude can either increase or decrease variance along a feature direction.

To summarize covariance changes with a scalar, fix the same $\lambda>0$ for both models and define
\begin{equation}
\widetilde h_\lambda(Z)
=\frac12\log\!\left[(2\pi e)^d\det(\Sigma+\lambda I_d)\right],
\label{eq:representation_entropy_surrogate}
\end{equation}
with $\widetilde h_\lambda(Z_q)$ defined using $\Sigma_q$.
Here $I_d$ is the identity matrix.
This is the Gaussian differential-entropy formula~\cite{Cover_2005} evaluated at a regularized covariance, without assuming that the features themselves are Gaussian.
The determinant multiplies the regularized variances along principal directions, so larger surrogate values indicate greater overall covariance spread.
We next compare the covariance perturbation with the smallest regularized variance, defining
\begin{equation}
 m_\lambda=\lambda_{\min}(\Sigma+\lambda I_d),\qquad
 \chi=\frac{2\sigma\tau+\tau^2}{m_\lambda}.
\label{eq:representation_relative_budget}
\end{equation}
The regularization ensures $m_\lambda\ge\lambda>0$.

\begin{samepage}
\begin{proposition}[Representation entropy-surrogate stability]
\label{prop:representation_geometry}
For the fixed representations under the common input law, assume finite second moments and the same regularization parameter $\lambda>0$.
If $\chi<1$, then
\begin{equation}
\left|\widetilde h_\lambda(Z_q)-\widetilde h_\lambda(Z)\right|
\le\frac d2[-\log(1-\chi)].
\label{eq:representation_entropy_bound}
\end{equation}
\end{proposition}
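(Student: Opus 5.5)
The plan is to reduce the entropy-surrogate difference to a log-determinant comparison and then control it eigenvalue by eigenvalue. Write $A=\Sigma+\lambda I_d$ and $B=\Sigma_q+\lambda I_d=A+\Delta\Sigma$. The $(2\pi e)^d$ factors cancel, so
\begin{equation*}
\widetilde h_\lambda(Z_q)-\widetilde h_\lambda(Z)=\tfrac12\log\det(A^{-1/2}BA^{-1/2})=\tfrac12\log\det(I_d+K),
\end{equation*}
where $K=A^{-1/2}\Delta\Sigma A^{-1/2}$. $A$ is symmetric positive definite with $\lambda_{\min}(A)=m_\lambda\ge\lambda>0$, so $A^{-1/2}$ exists. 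Since $\Delta\Sigma$ is symmetric, $K$ is symmetric with real eigenvalues $\kappa_1,\dots,\kappa_d$. Therefore the difference equals $\frac12\sum_i\log(1+\kappa_i)$, provided every $1+\kappa_i>0$.

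Next I bound the spectrum of $K$ using the covariance perturbation bound~\eqref{eq:representation_covariance_bound}. Submultiplicativity gives
\begin{equation*}
\|K\|_{\mathrm{op}}\le\|A^{-1/2}\|_{\mathrm{op}}^2\|\Delta\Sigma\|_{\mathrm{op}}\le\frac{2\sigma\tau+\tau^2}{m_\lambda}=\chi .
\end{equation*}
Hence every $\kappa_i\in[-\chi,\chi]$. The hypothesis $\chi<1$ then gives $1+\kappa_i\ge1-\chi>0$. This confirms that $B$ is positive definite, so the logarithm is well defined and the log-det identity above is legitimate.

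The remaining step is scalar. For $|\kappa|\le\chi<1$, I claim $|\log(1+\kappa)|\le-\log(1-\chi)$.
\begin{itemize}
\item For $\kappa\in[-\chi,0]$ this holds because $\log$ is monotone, so $|\log(1+\kappa)|=-\log(1+\kappa)\le-\log(1-\chi)$.
\item For $\kappa\in[0,\chi]$ we have $\log(1+\kappa)\le\log(1+\chi)\le-\log(1-\chi)$, since $(1+\chi)(1-\chi)=1-\chi^2\le1$.
\end{itemize}
Summing over the $d$ eigenvalues and multiplying by $\tfrac12$ yields~\eqref{eq:representation_entropy_bound}.

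I expect no real obstacle. The only delicate points are two. First, the congruence by $A^{-1/2}$ must be justified so that the eigenvalues of $K$, rather than those of $\Delta\Sigma$ itself, are compared against $m_\lambda$. Second, one must note that~\eqref{eq:representation_covariance_bound} was derived without independence or zero-mean assumptions, so it applies here verbatim under finite second moments. The asymmetric treatment of positive versus negative eigenvalues is what produces the one-sided expression $-\log(1-\chi)$.
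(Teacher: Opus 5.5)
Your proposal is correct and follows the paper's proof essentially step for step: the congruence by $A^{-1/2}$, the submultiplicativity bound $\|A^{-1/2}\Delta\Sigma A^{-1/2}\|_{\mathrm{op}}\le\chi$, the log-determinant written as a sum over eigenvalues, and the two-case scalar inequality based on $1+\chi\le(1-\chi)^{-1}$. One small redundancy: $I_d+K$ is positive definite regardless of $\chi$, because $\Sigma_q+\lambda I_d\succ 0$. The hypothesis $\chi<1$ is therefore needed only for the scalar bound, not for the logarithms to be defined.
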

\end{samepage}
\begin{IEEEproof}
Set $A=\Sigma+\lambda I_d$ and $B=A^{-1/2}\Delta\Sigma A^{-1/2}$.
Since $\|A^{-1/2}\|_{\mathrm{op}}^2=m_\lambda^{-1}$, submultiplicativity and~\eqref{eq:representation_covariance_bound} give
\begin{equation}
\|B\|_{\mathrm{op}}
\le\frac{\|\Delta\Sigma\|_{\mathrm{op}}}{m_\lambda}
\le\chi.
\label{eq:representation_relative_norm}
\end{equation}
The symmetric matrix $B$ has eigenvalues $\beta_i\in[-\chi,\chi]$.
The factorization $A+\Delta\Sigma=A^{1/2}(I_d+B)A^{1/2}$ implies
\begin{equation}
\widetilde h_\lambda(Z_q)-\widetilde h_\lambda(Z)
=\frac12\sum_{i=1}^{d}\log(1+\beta_i).
\label{eq:representation_spectral_difference}
\end{equation}
Because $\chi<1$, all logarithms are defined, and $1+\chi\le(1-\chi)^{-1}$ gives $|\log(1+\beta_i)|\le-\log(1-\chi)$.
Taking absolute values and summing proves~\eqref{eq:representation_entropy_bound}.
Appendix~G, Steps~5--8, details the matrix factorization and both eigenvalue cases.
\end{IEEEproof}

The transformation $A^{-1/2}$ measures covariance changes relative to the regularized variance along each original principal direction.
For fixed $\sigma$ and $\tau>0$, a smaller $m_\lambda$ weakens the bound within $\chi<1$.
Both regularized covariances remain positive definite even when this sufficient condition fails.
Comparisons use a common $\lambda$ and compatible normalization.

Figure~\ref{fig:quantization_information_bounds}(b) illustrates~\eqref{eq:representation_entropy_bound} under an idealized parameter setting.
With $\sigma=m_\lambda=1$, centered error scale $\tau_b\le\Delta_b=2/(2^b-1)$, and clipping omitted, the per-dimension bound is $-\tfrac12\log[1-(2\Delta_b+\Delta_b^2)]$ when $2\Delta_b+\Delta_b^2<1$.
Appendix~G, Steps~9--10, verifies this setting and the plotted values.

This surrogate describes internal geometry without assuming attacker access to features; it neither measures information in $O$ nor directly bounds reconstruction error or MIA success.

\noindent\textbf{Key takeaway.} The same nonzero variance change is proportionally larger along a direction with smaller regularized variance.
The stability bound depends on quantization error relative to the original feature spread.

Both local error moments and representation covariance depend on the input distribution.
This motivates examining how data-dependent geometry and quantization responses relate to inversion outcomes across domains.

\section{Data Effects on Model Inversion under Quantization}
\label{sec:data_effects}
Section~\ref{sec:information_theory} links quantization effects to data-dependent error propagation and representation geometry.
To distinguish pre-existing inversion risk from quantization-induced changes, we first examine baseline inversion difficulty across data domains, then investigate how quantization changes model behavior and inversion outcomes (Fig.~\ref{fig:data_characteristics_quantization}).
Dataset and experimental details are provided in Sec.~\ref{subsec:experimental_setup}.

% Use the user's four-panel export; the wide float is placed on page 9.
\begin{figure*}[!t]
\centering
\includegraphics[width=\textwidth]{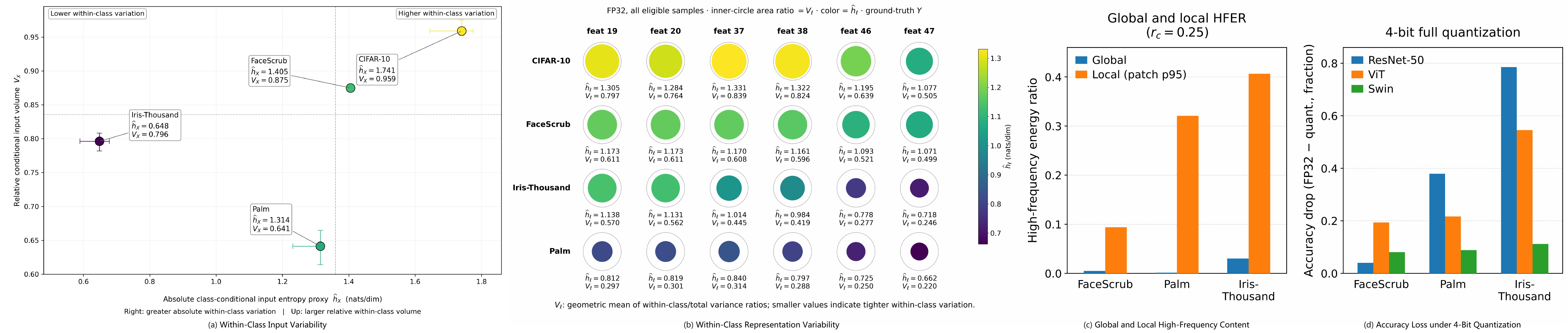}
\caption{\textbf{Data characteristics and model responses to quantization.}
(a) Input variability in a shared 32-dimensional PCA space~\cite{Pearson_1901}: the horizontal axis shows an absolute Gaussian within-class entropy proxy, and the vertical axis shows relative conditional volume. Error bars denote 95\% class-bootstrap confidence intervals.
(b) Within-class variability of FP32 representations across layers. Inner-circle area relative to the outer circle represents the geometric mean of per-coordinate within-class-to-total variance ratios; color indicates the corresponding normalized Gaussian entropy proxy. Conditioning in (a) and (b) uses the ground-truth class label $Y$.
(c) Dataset means of the global high-frequency energy ratio (HFER) and the per-image patch-level 95th-percentile HFER, with radial cutoff $r_c=0.25$, based on spectral texture analysis~\cite{Haralick_1979,Wang_2020_CVPR}.
(d) FP32-minus-quantized accuracy under 4-bit weight-and-activation quantization across datasets and architectures.
Panels use the analysis populations specified in Appendices~H and~I; comparisons across panels are at the domain level.}
\label{fig:data_characteristics_quantization}
\end{figure*}

\subsection{Data-Dependent Model Inversion Difficulty}
\label{subsec:data_dependent_inversion}
To relate data structure to reconstruction and identity recovery~\cite{11537378}, we separate the global covariance in Sec.~\ref{subsec:representation_geometry} into class-wise components.
Under a class-balanced input law, $\Sigma_T=\operatorname{Cov}(Z)$ satisfies
\begin{equation}
\Sigma_T=\Sigma_W+\Sigma_B,
\label{eq:class_covariance_decomposition}
\end{equation}
where $\Sigma_W=\mathbb E_Y[\operatorname{Cov}(Z\mid Y)]$ captures within-class variation and $\Sigma_B=\operatorname{Cov}_Y(\mathbb E[Z\mid Y])$ captures variation between class means.
Reconstructing a particular input requires instance-specific information, whereas recovering a representative identity need not preserve all within-class differences.
Appendix~\mbox  {H-E} connects this decomposition to the class-balanced estimators below.

Figure~\ref{fig:data_characteristics_quantization}(a) summarizes within-class input variation in a shared principal component analysis (PCA) space~\cite{Pearson_1901}.
The Gaussian entropy proxy $\widehat h_X$ summarizes absolute within-class spread in these shared coordinates, with higher values indicating greater spread.
The relative conditional-volume proxy $V_X$ summarizes within-class variation relative to total variation: lower values indicate tighter same-identity geometry relative to the overall spread, and higher values indicate a greater within-class share.
Palm has the lowest $V_X$ ($0.641$), compared with FaceScrub ($0.875$) and Iris-Thousand ($0.796$).
In contrast, Iris-Thousand has the lowest $\widehat h_X$, $0.648$ nats per dimension, compared with $1.405$ for FaceScrub.
Palm's localized line patterns may make identity recovery particularly sensitive to the generative prior's coverage of identity-specific texture.
Together with its relatively concentrated class-conditional geometry, this motivates our hypothesis that current generative priors recover Palm identities less readily than those of FaceScrub or Iris-Thousand.
Appendix~\mbox  {H-A} details the estimators.
Both orderings persist for PCA dimensions $16$, $32$, and $64$ and with common ridge (Appendix~\mbox  {H-C}).

Input structure alone does not characterize the target model.
Figure~\ref{fig:data_characteristics_quantization}(b) therefore examines relative within-class variation in global-average-pooled FP32 representations.
Its relative proxy $V_\ell$ is the geometric mean of coordinatewise within-class-to-total variance ratios, rather than a full-covariance measure as in the input analysis; their values are not directly comparable.
At the deepest monitored layer, $V_\ell$ is $0.499$ for FaceScrub, $0.220$ for Palm, and $0.246$ for Iris-Thousand.
The lower Palm and Iris-Thousand values indicate more concentrated within-identity features relative to the overall spread.
Such concentration can coexist with clear differences between identities, which identity-recovery attacks seek to exploit.
Palm already exhibits relatively concentrated within-class features at earlier layers, whereas Iris-Thousand undergoes a larger reduction across the monitored layers: $56.9\%$, compared with $18.3\%$ for FaceScrub.
Among correctly classified samples, FaceScrub and CIFAR-10 retain higher values than Palm and Iris-Thousand (Appendices~\mbox  {H-B} and~\mbox  {H-C}).

The FP32 attacks reveal corresponding differences in baseline identity recovery.
BREP-MI~\cite{Kahla_2022_CVPR} achieves strict success rates of $20.00\%$ on Palm, $52.08\%$ on FaceScrub, and $66.00\%$ on Iris-Thousand; RL-MIA~\cite{Han_2023_CVPR} achieves $54.00\%$, $57.92\%$, and $90.67\%$, respectively.
Palm is thus harder to recover in these settings, particularly for BREP-MI, while Iris-Thousand remains highly vulnerable despite its lower deep-layer within-class variation.
The strength of the domain difference depends on the attack: LOKT gives nearly equal success on Palm and FaceScrub.
Table~XII in Appendix~\mbox  {H-D} reports the complete FP32 comparisons.

Class cardinality, auxiliary-data coverage, generative priors, and model utility vary across domains (Appendix~\mbox  {H-D}).
These results establish each domain's starting risk under its attack setup.
The class-wise decomposition also identifies the geometry controlled by our regularizer: mean squared class-center distance is proportional to between-class covariance trace in the corresponding feature space (Appendix~\mbox  {H-E}).
Utility-oriented refinement can increase class separation to improve recognition.
Since identity-recovery attacks exploit class-discriminative model responses, restoring this separation may also restore useful attack cues.
We therefore penalize inter-class expansion relative to the initial quantized model, limiting this restoration while the task objective recovers utility (Sec.~\ref{sec:privacy_aware_ptq}).
We next assess quantization-induced changes against these domain-specific FP32 baselines.

\begin{samepage}
\noindent\textbf{Key takeaway.} Data characteristics and attack conditions jointly shape baseline inversion difficulty, providing the reference for assessing quantization-induced changes in risk.
\end{samepage}

\subsection{Data-Dependent Effects of Model Quantization}
\label{subsec:data_dependent_quantization}

The data-dependent error terms in Sec.~\ref{sec:information_theory} suggest that equal bit widths need not produce equal changes across domains.
Following spectral texture analysis~\cite{Haralick_1979,Wang_2020_CVPR}, we use the high-frequency energy ratio (HFER) to measure the fraction of Fourier power above a fixed radial cutoff.
Higher HFER indicates a greater relative high-frequency contribution; its patch-level 95th percentile highlights local regions with high ratios.
Figure~\ref{fig:data_characteristics_quantization}(c) reveals distinct texture profiles: Palm has a lower global HFER than FaceScrub but a higher local HFER, whereas Iris-Thousand has higher values of both.
The mean patch-level 95th-percentile HFER is $0.094$, $0.321$, and $0.406$ for FaceScrub, Palm, and Iris-Thousand, respectively.
Both global and local orderings persist after excluding auxiliary images (Appendix~\mbox  {I-A}).
These observations motivate the hypothesis that models relying on localized palmprint patterns or fine iris textures are more sensitive to low-precision quantization than face models, with the response also depending on the architecture.

% Introduce the method overview here so that the wide float appears at the top of page 10.
\begin{figure*}[!t]
\centering
\includegraphics[width=\textwidth]{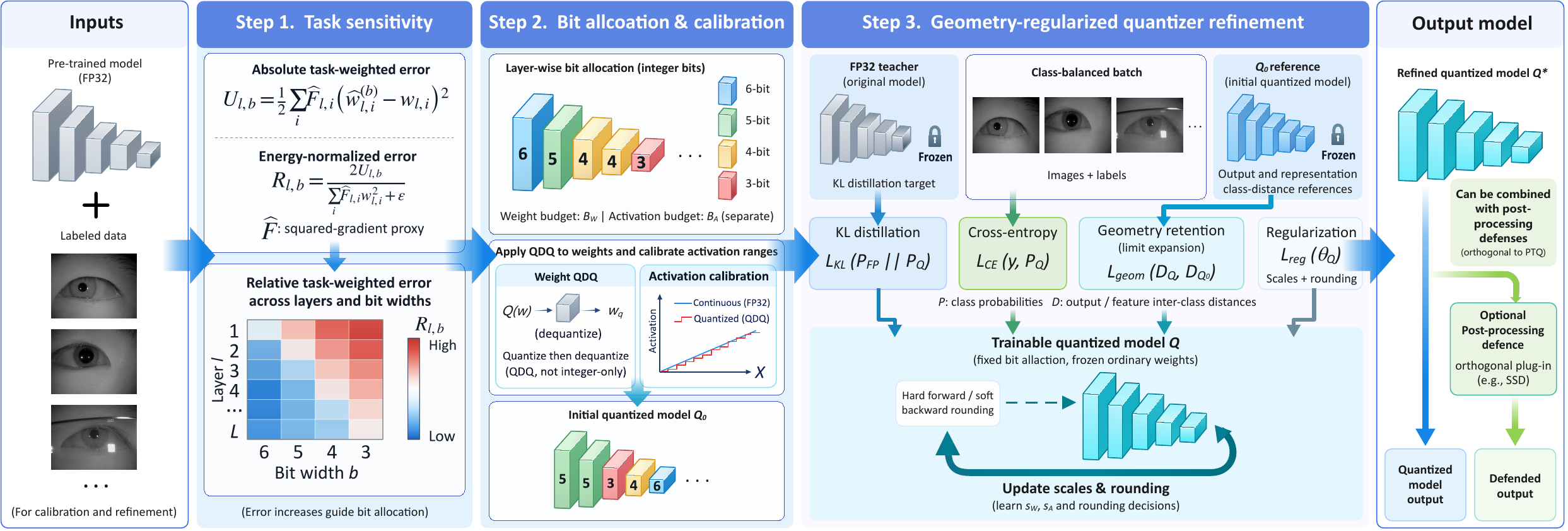}
\caption{\textbf{Overview of privacy-aware mixed-precision PTQ.} (1) Estimate task sensitivity at candidate bit widths. (2) Allocate weight and activation precision under separate budgets and calibrate activation ranges. (3) Refine quantization scales and weight-rounding decisions with fixed bit widths and network weights. The FP32 teacher guides utility recovery, while the initial quantized model provides the reference for limiting inter-class expansion.}
\label{fig:privacy_aware_ptq}
\end{figure*}

The 4-bit weight-and-activation results in Fig.~\ref{fig:data_characteristics_quantization}(d) support this hypothesis within the tested settings.
For ResNet-50, accuracy changes from $78.40\%$ to $74.40\%$ on FaceScrub, from $99.01\%$ to $61.08\%$ on Palm, and from $96.67\%$ to $18.13\%$ on Iris-Thousand.
The accuracy loss follows the same domain ordering for ViT and Swin, although the Palm--FaceScrub difference becomes much smaller.
Across the evaluated configurations, 8- and 6-bit quantization largely retain task accuracy, while 3-bit quantization often incurs severe degradation.
This contrast complements the theoretical motivation for studying 4 bits (Sec.~\ref{subsec:prediction_information}); Appendix~\mbox  {I-C} reports all configurations.

To trace the model response beyond accuracy, we examine both internal representations and predictions.
Paired feature cosine similarity compares FP32 and quantized features for the same input; values closer to one indicate better-preserved feature directions.
At 4 bits, the mean paired feature cosine similarity for ResNet-50 is $0.903$ on FaceScrub, compared with $0.724$ on Palm and $0.677$ on Iris-Thousand.
The corresponding label-disagreement rates are $17.60\%$, $38.92\%$, and $82.27\%$.
This disagreement directly estimates $\eta_q$ in Sec.~\ref{subsec:prediction_information}, connecting the empirical responses to the hard-label interface.
However, Palm has less label disagreement than FaceScrub under ViT and Swin despite a larger accuracy loss.
Quantization therefore changes representation alignment, prediction consistency, and task accuracy in related but distinct ways.
Appendices~\mbox  {I-B} and~\mbox  {I-C} give metric definitions, layerwise results, and weight-only comparisons.

To assess identity recovery, we use uniform calibration-only ResNet-50 PTQ models distinct from those in Fig.~\ref{fig:data_characteristics_quantization}(d), comparing each attack against its own FP32 baseline (Appendix~\mbox  {I-D}).
Their utility values therefore need not match the diagnostics.
At 4 bits, RL-MIA~\cite{Han_2023_CVPR} strict success decreases from $57.92\%$ to $53.58\%$ on FaceScrub and from $90.67\%$ to $34.00\%$ on Iris-Thousand.
Their task accuracies change from $78.98\%$ to $74.40\%$ and from $95.20\%$ to $21.33\%$, respectively.
The larger reduction on Iris is therefore accompanied by a substantial loss of utility.
The effect also depends on the attack: on Palm, RL-MIA strict success decreases from $54.00\%$ to $32.00\%$, whereas BREP-MI~\cite{Kahla_2022_CVPR} remains at $20.00\%$.
Model perturbation alone consequently does not specify the reduction achieved against a particular MIA.

Task accuracy is likewise insufficient to summarize inversion risk.
Palm's 8- and 6-bit PTQ models both achieve $98.03\%$ accuracy, yet RL-MIA strict success changes from $40.00\%$ to $28.00\%$ and BREP-MI from $30.00\%$ to $18.00\%$.
Reducing precision further to 4 bits raises these success rates to $32.00\%$ and $20.00\%$, even as accuracy falls to $60.10\%$.
These comparisons show why neither lower precision nor greater utility loss guarantees a better privacy--utility trade-off.

These differences across domains and models motivate task-sensitive bit allocation.
Subsequent utility recovery must also account for identity structure that may become easier to exploit.
Together with the class-wise geometry analysis in Sec.~\ref{subsec:data_dependent_inversion}, these findings motivate the constrained quantizer refinement developed in Sec.~\ref{sec:privacy_aware_ptq}.
Section~\ref{sec:experiments} evaluates the resulting trade-off through identity recovery and ablation studies.

\begin{samepage}
\noindent\textbf{Key takeaway.} Quantization changes inversion risk differently across data domains and attacks; useful protection requires assessing these changes jointly with task accuracy.
\end{samepage}

\section{Privacy-Aware Mixed-Precision PTQ}
\label{sec:privacy_aware_ptq}
The analyses in Secs.~\ref{sec:information_theory} and~\ref{sec:data_effects} motivate adapting precision to task sensitivity and constraining inter-class expansion during utility recovery.
Given a pretrained model, labeled calibration data, and separate weight and activation bit budgets, our method comprises three steps (Fig.~\ref{fig:privacy_aware_ptq}): task-sensitivity estimation, bit allocation and calibration, and geometry-regularized quantizer refinement.

\subsection{Task-Sensitivity Estimation}
\label{subsec:task_sensitivity}

Section~\ref{subsec:prediction_information} relates prediction changes to local quantization errors and their propagation, while Sec.~\ref{subsec:data_dependent_quantization} reveals different utility losses at equal precision across domains and architectures.
Inspired by Fisher-information analyses of layer importance for classification and MIAs~\cite{Ho_2024_CVPR} and of data leakage~\cite{pmlr-v161-hannun21a}, we use a diagonal Fisher-type proxy to assess the task impact of candidate quantization errors.

Let $W_\ell$ denote a layer's weight tensor and $\mathcal D_{\mathrm{cal}}$ calibration data labeled for the target task.
For mini-batch $\mathcal B_t\subseteq\mathcal D_{\mathrm{cal}}$, define the mean classification loss
\begin{equation}
\mathcal L_t
=-\frac{1}{|\mathcal B_t|}
\sum_{(x,y)\in\mathcal B_t}\log p_y(x),
\label{eq:calibration_task_loss}
\end{equation}
where $p_y(x)$ is the pretrained model's probability for class $y$.
With $g_{\ell,t}=\nabla_{W_\ell}\mathcal L_t$, we average the elementwise squared gradients over $T$ calibration mini-batches:
\begin{equation}
\widehat F_\ell
=\frac1T\sum_{t=1}^{T}g_{\ell,t}\odot g_{\ell,t}.
\label{eq:task_fisher_proxy}
\end{equation}
Larger entries indicate greater observed task-loss sensitivity.

For each candidate bit width $b$, obtain $\widehat W_\ell^{(b)}$ using~\eqref{eq:uniform_quantization}, with per-output-channel scales determined by maximum absolute weights and bit width.
Writing $\Delta W_{\ell,b}=\widehat W_\ell^{(b)}-W_\ell$, define the task-weighted error and its energy-normalized counterpart:
\begin{equation}
U_{\ell,b}=\tfrac12\sum_i\widehat F_{\ell,i}(\Delta W_{\ell,b,i})^2,
\;
R_{\ell,b}=\tfrac{2U_{\ell,b}}{\sum_i\widehat F_{\ell,i}W_{\ell,i}^2+\epsilon_{\mathrm{num}}},
\label{eq:task_weighted_quantization_error}
\end{equation}
Here $i$ indexes weight entries and $\epsilon_{\mathrm{num}}>0$ is the numerical stabilizer, shown as $\epsilon$ in Fig.~\ref{fig:privacy_aware_ptq}.
The quadratic form is a local sensitivity proxy, not an exact second-order loss change. Larger $U_{\ell,b}$ indicates greater estimated task cost; $R_{\ell,b}$ normalizes it by the layer's weighted parameter energy.
Reusing $\widehat F_\ell$ across bit widths, we guide budget-aware weight allocation to limit utility loss at low precision (Sec.~\ref{subsec:bit_allocation_calibration}).

\subsection{Budget-Aware Bit Allocation and Calibration}
\label{subsec:bit_allocation_calibration}

We prescribe target average bit widths $B_W$ and $B_A$ for weights and activations.
To approach these targets through mixed precision, we use the cost tables from Sec.~\ref{subsec:task_sensitivity} to assign weight bits $b_\ell^W$ and activation bits $b_j^A$ from integer candidates.
With $n_\ell$ quantized weight entries in layer $\ell$ and $m_j$ activation elements per input at site $j$, the average precisions are
\begin{equation}
\bar b_W=\frac{\sum_\ell n_\ell b_\ell^W}{\sum_\ell n_\ell},
\qquad
\bar b_A=\frac{\sum_j m_j b_j^A}{\sum_j m_j}.
\label{eq:average_bit_budgets}
\end{equation}

Weight allocation starts from the largest candidate bit width.
For a reduction $b\to b'$ to the next lower candidate in layer $\ell$, define $\Delta U_\ell=[U_{\ell,b'}-U_{\ell,b}]_+$ and $\Delta R_\ell=[R_{\ell,b'}-R_{\ell,b}]_+$, where $[z]_+=\max(z,0)$.
The reduction in average weight precision is $d_\ell=n_\ell(b-b')/\sum_k n_k$.
We score each move by
\begin{equation}
S_\ell(b\to b')
=\frac{\dfrac{\Delta U_\ell}{U_{\ell,b}+\epsilon_{\mathrm{num}}}
+\dfrac{\Delta R_\ell}{R_{\ell,b}+\epsilon_{\mathrm{num}}}}
{\max(d_\ell,\epsilon_{\mathrm{num}})}.
\label{eq:bit_allocation_score}
\end{equation}
Within each layer, $R_{\ell,b}$ is a fixed multiple of $U_{\ell,b}$; their relative-increase terms differ only through numerical stabilization.
We apply the reduction with the smallest score, repeating until $\bar b_W\le B_W$ or no further reduction is available.
This favors low relative cost increases per bit saved.

To reuse the weight-layer sensitivity estimates without constructing a separate activation cost table, we initialize activation bits from associated weight layers.
Let $\pi(j)$ map activation site $j$ to a weight layer according to the architecture; initially, $b_j^A=b_{\pi(j)}^W$, restricted to the allowed candidates.
We adjust one site at a time toward $B_A$, selecting the adjacent candidate that minimizes the resulting absolute gap $|\bar b_A-B_A|$.
We reject moves that increase this gap and stop at the target, when no admissible move remains, or at an iteration limit.
Discrete assignments can leave a residual gap to $B_A$.

The quantization-step and clipping terms in~\eqref{eq:local_activation_bound} motivate calibrating activation ranges after bit allocation.
With weights quantized, we collect channelwise extrema over $\mathcal D_{\mathrm{cal}}$, bypassing activation quantization during this pass.
The configured sign convention at site $j$ sets $(q_{\min,j},q_{\max,j})$ to $(0,2^{b_j^A}-1)$ for unsigned quantization or $(-2^{b_j^A-1},2^{b_j^A-1}-1)$ for signed quantization.
For channel $k$ with extrema $a_{jk}^{-},a_{jk}^{+}$, set $r_{jk}=\max(a_{jk}^{+},\epsilon_0)$ at unsigned sites and $r_{jk}=\max(|a_{jk}^{-}|,|a_{jk}^{+}|,\epsilon_0)$ at signed sites, where $\epsilon_0>0$ handles zero ranges.
These ranges define the nominal calibration step
\begin{equation}
s_{jk}^{*}
=\operatorname{clip}\!\left(\frac{r_{jk}}{q_{\max,j}},s_{\min},s_{\max}\right),
\label{eq:nominal_activation_scale}
\end{equation}
where $0<s_{\min}<s_{\max}$ are numerical scale bounds.
The trainable parameter $\xi_{jk}^A$ determines the effective step through
\begin{equation}
s_{jk}^{A}
=\operatorname{clip}\!\left(\operatorname{softplus}(\xi_{jk}^A),s_{\min},s_{\max}\right).
\label{eq:effective_activation_scale}
\end{equation}
The implementations initialize $\xi_{jk}^{A,0}$ with either $\log s_{jk}^{*}$ or $\operatorname{softplus}^{-1}(s_{jk}^{*})$.
The former yields $\log(1+s_{jk}^{*})$ before clipping, while the latter recovers $s_{jk}^{*}$.
Section~\ref{subsec:experimental_setup} specifies the initializer used in each experimental setting.
We then enable activation quantization to obtain $f_{q^0}$, the initial model with quantized weights and activations in Fig.~\ref{fig:privacy_aware_ptq}.
In Sec.~\ref{subsec:quantizer_refinement}, we use a frozen copy to penalize inter-class expansion during utility recovery.

\subsection{Geometry-Regularized Quantizer Refinement}
\label{subsec:quantizer_refinement}

Starting from $f_{q^0}$, we refine quantization parameters to recover utility at the allocated precision.
As discussed in Sec.~\ref{subsec:data_dependent_inversion}, restoring class separation can also restore cues useful for identity recovery.
We therefore constrain inter-class expansion while jointly optimizing weight and activation scales and weight-rounding decisions, with bit widths and pretrained parameters fixed.

Two frozen references serve different roles: the FP32 model $f$ guides utility recovery, and $f_{q^0}$ anchors the geometry constraint.
On a class-balanced calibration mini-batch $\mathcal B$, we combine a generalized label-smoothing classification loss with prediction distillation:
\begin{equation}
\mathcal L_{\mathrm{CE}}
=-\mathbb E_{\mathcal B}\!\left[
(1-\eta)\log p_{q,y}
+\frac{\eta}{M}\sum_{c=1}^{M}\log p_{q,c}
\right],
\label{eq:refinement_classification_loss}
\end{equation}
\begin{equation}
\mathcal L_{\mathrm{KL}}
=\mathbb E_{\mathcal B}
D_{\mathrm{KL}}\!\left(\mathbf p_f(x)\,\|\,\mathbf p_q(x)\right),
\label{eq:refinement_distillation_loss}
\end{equation}
Here $p_{q,c}=p_{q,c}(x)$, $\mathbb E_{\mathcal B}$ averages over batch examples, and $M$ is the number of task classes.
For $\eta\in[0,1]$, the objective implements uniform label smoothing, with $\eta=0$ giving ordinary cross-entropy.
We also evaluate $\eta=-0.001$ as a signed label-sharpening variant whose coefficients do not define a probability target (Sec.~\ref{subsec:experimental_setup}).

To constrain the class structure restored during utility recovery, we connect the covariance analysis in Sec.~\ref{subsec:representation_geometry} and its class-wise decomposition in~\eqref{eq:class_covariance_decomposition} to a differentiable distance statistic.
For batch vectors $u_i$ from $K_{\mathcal B}\ge2$ classes, let $m_c$ be the class-$c$ mean and $\bar m$ the equally weighted mean of class centers.
Define the mean squared inter-class distance
\begin{equation}
D_{\mathrm{inter}}(\mathcal B;u)
=\frac{\sum_{c\ne c'}\|m_c-m_{c'}\|_2^2}
{K_{\mathcal B}(K_{\mathcal B}-1)}.
\label{eq:refinement_inter_class_distance}
\end{equation}
Let $S_{B,\mathcal B}^{u}=K_{\mathcal B}^{-1}\sum_c(m_c-\bar m)(m_c-\bar m)^\top$ denote the corresponding between-class covariance.
Since $\sum_c(m_c-\bar m)=0$, expanding the squared distances gives
\begin{equation}
D_{\mathrm{inter}}(\mathcal B;u)
=\frac{2K_{\mathcal B}}{K_{\mathcal B}-1}
\operatorname{tr}(S_{B,\mathcal B}^{u}).
\label{eq:refinement_distance_trace}
\end{equation}
Larger distances indicate greater spread of class centers and a larger between-class covariance trace.

We constrain class separation in output probabilities and a selected late-layer representation.
For the feature term, we flatten and $L_2$-normalize each sample's representation before computing class means.
For each space $r\in\{\mathrm{out},\mathrm{feat}\}$, let $d_r$ and $d_r^0$ be the distances from the current model and frozen $f_{q^0}$ on the same batch.
We penalize relative expansion using
\begin{equation}
\mathcal R_{\epsilon_{\mathrm{num}}}(d,d_0)
=\left[\max\!\left\{0,
\log\!\left(\frac{d}{d_0+\epsilon_{\mathrm{num}}}+\epsilon_{\mathrm{num}}\right)
\right\}\right]^2,
\label{eq:refinement_retention_penalty}
\end{equation}
\begin{equation}
\mathcal L_{\mathrm{geom}}
=\lambda_{\mathrm{out}}\mathcal R_{\epsilon_{\mathrm{num}}}(d_{\mathrm{out}},d_{\mathrm{out}}^0)
+\lambda_{\mathrm{feat}}\mathcal R_{\epsilon_{\mathrm{num}}}(d_{\mathrm{feat}},d_{\mathrm{feat}}^0).
\label{eq:refinement_geometry_loss}
\end{equation}
The quantized reference adapts the constraint to each model's starting geometry, and the distance ratio avoids a shared absolute-distance threshold across domains.
The stabilized one-sided penalty targets expansion beyond this reference without continually forcing smaller class distances to contract, leaving the task objectives room to recover utility.

We also penalize scale-parameter deviations from fixed references and encourage decisive rounding assignments.
These parameters determine the local quantization errors propagated in Sec.~\ref{subsec:prediction_information}.
Let $\xi_j^A$ collect the channel parameters in~\eqref{eq:effective_activation_scale}.
Let $s_\ell^{W,0}$ be the initial per-output-channel weight scales and set $s_\ell^W=s_\ell^{W,0}\exp(\delta_\ell^W)$ elementwise.
Weight-rounding logits $v_\ell^W$ define elementwise binary decisions $\mathbf1[v_\ell^W\ge0]$, with sigmoid surrogate $h_\ell=\sigma(v_\ell^W/\tau_r)$ at temperature $\tau_r>0$.
Floor codes computed at the initial weight scales are clipped to $[q_{\min},q_{\max}-1]$ and then fixed, preserving two integer candidates per weight during scale updates.
Adding the binary decisions, clipping to the assigned integer range, and multiplying by $s_\ell^W$ yields the quantized weights.
The forward pass uses these hard decisions, while the backward pass uses sigmoid derivatives; activation quantizers also use surrogate gradients.
Here, $\langle\cdot\rangle_{\mathcal A}$ and $\langle\cdot\rangle_{\mathcal W}$ average entries within each activation or weight quantizer, then across quantizers:
\begin{equation}
\begin{aligned}
\mathcal R_A&=\langle(\xi^A-\xi^{A,\mathrm{ref}})^2\rangle_{\mathcal A},\\
\mathcal R_W&=\langle(\delta^W)^2\rangle_{\mathcal W},\qquad
\mathcal R_{\mathrm{round}}=\langle h(1-h)\rangle_{\mathcal W},
\end{aligned}
\label{eq:refinement_parameter_penalties}
\end{equation}
where $\xi^{A,\mathrm{ref}}$ denotes the fixed reference parameters stored by the activation quantizers.
Their combined contribution is
\begin{equation}
\mathcal L_{\mathrm{reg}}
=\lambda_A\mathcal R_A+\lambda_W\mathcal R_W
+\lambda_r\mathcal R_{\mathrm{round}}.
\label{eq:refinement_regularization_loss}
\end{equation}

With $\theta_Q=\{\xi^A,\delta^W,v^W\}$, we jointly minimize $\lambda_{\mathrm{CE}}\mathcal L_{\mathrm{CE}}+\lambda_{\mathrm{KL}}\mathcal L_{\mathrm{KL}}+\mathcal L_{\mathrm{geom}}+\mathcal L_{\mathrm{reg}}$.
Scale parameters are projected onto fixed bounded intervals after each update, while the bit allocation from Sec.~\ref{subsec:bit_allocation_calibration} is retained.
In the cumulative ablations of Sec.~\ref{subsec:ablations}, every group optimizes scales and rounding decisions; the full combination of geometry and parameter penalties lowers strict attack success on Palm relative to CE--KL refinement.
Since the method adjusts the model's quantization parameters, it also supports composition with output-level post-processing defenses.
Section~\ref{subsec:experimental_setup} details optimization settings, feature locations, and calibration protocols.

\section{Experiments}
\label{sec:experiments}
We evaluate how the proposed PTQ method balances recognition utility and identity-recovery risk across data domains and model architectures.
We also assess its composition with output-level defenses and examine the effects of refinement objectives and calibration-data size.

% One-column setup summary at the top of the right column on page 12.
\begin{table}[!t]
\caption{Experimental setup for quantization and model inversion.}
\label{tab:experimental_setup}
\centering
\begingroup
\setlength{\tabcolsep}{2pt}
\renewcommand{\arraystretch}{1.00}
\setlength{\aboverulesep}{1.3pt}
\setlength{\belowrulesep}{1.3pt}
\fontsize{7.5pt}{8.6pt}\selectfont
\begin{tabular*}{\columnwidth}{@{\extracolsep{\fill}}ccccc@{}}
\toprule
\multicolumn{5}{l}{\textit{(a) Target-data split and auxiliary-pool size}} \\
Dataset & IDs & Train pool & Test & Auxiliary \\
\midrule
FaceScrub & 530 & 33,756 (80.00\%) & 8,440 (20.00\%) & 195,721 ($5.80\times$) \\
Palm & 50 & 617 (75.24\%) & 203 (24.76\%) & 4,682 ($7.59\times$) \\
Iris-Thousand & 150 & 2,250 (75.00\%) & 750 (25.00\%) & 17,000 ($7.56\times$) \\
\midrule
\end{tabular*}
\par\vspace{1pt}
\begin{tabular*}{\columnwidth}{@{\extracolsep{\fill}}ccccc@{}}
\multicolumn{5}{l}{\textit{(b) FP32 targets and independent evaluators}} \\
Dataset & Target & Acc. & Evaluator & Acc. \\
\midrule
FaceScrub & ResNet-50 & 78.98\% & ResNet-101 & 96.37\%$^{v}$ \\
Palm & ResNet-50 & 99.01\% & ResNet-101 & 96.06\% \\
Iris-Thousand & ResNet-50 & 95.20\% & ResNet-101 & 95.20\% \\
Iris-Thousand & Swin & 68.93\% & ResNet-101 & 95.20\% \\
\bottomrule
\end{tabular*}
\par\vspace{2pt}
\begin{minipage}{\columnwidth}
\fontsize{7.0pt}{8.0pt}\selectfont
\textit{Note:} Counts: images; IDs: target identities.
Train/test shares use $N_{\rm train}+N_{\rm test}$; auxiliary multiples use $N_{\rm train}$.
Face uses a separate CelebA auxiliary pool; Palm/Iris use 260/850 auxiliary identities disjoint from target IDs.
Acc. is image-level top-1 on the target tests in (a), except $^{v}$ (Face evaluator validation).
Evaluator fit/validation: 37,977/4,219 from all 42,196 Face images; 566/51 from Palm's training pool; 1,800/450 from Iris's 2,250 training images.
Both Iris targets use this split; their results in Figs.~\ref{fig:main_iris_architectures}--\ref{fig:ssd_composition} share the original ResNet-101 evaluator.
\end{minipage}
\endgroup
\end{table}

% Queued with Sec. VI-A so setup text can fill space below the plots.
\begin{figure*}[!t]
\centering
\includegraphics[width=\textwidth]{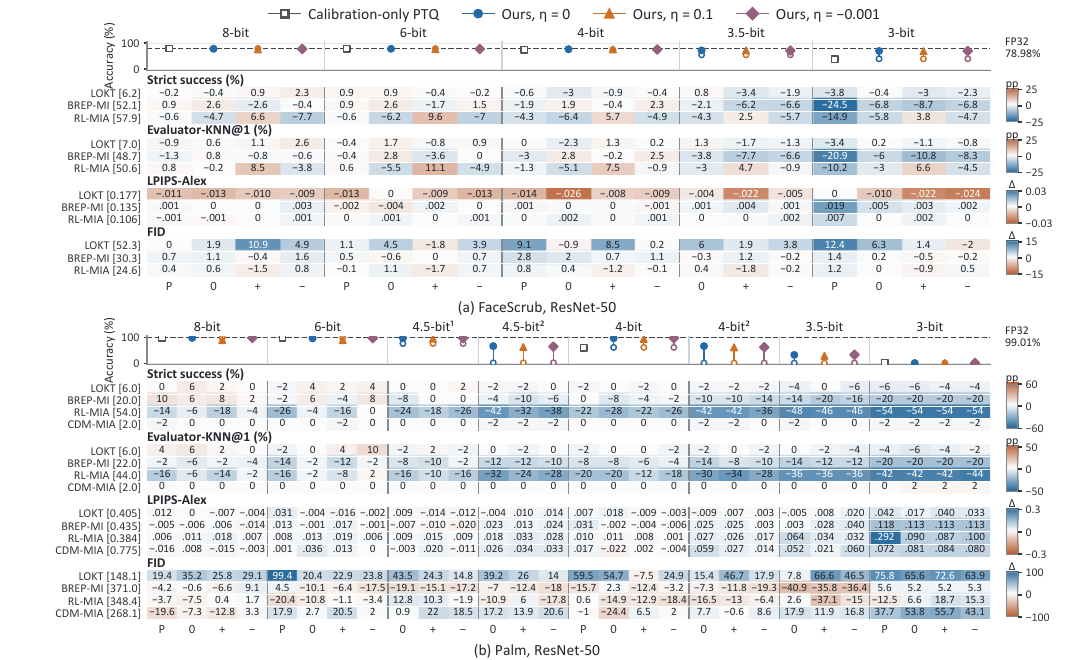}
\caption{\textbf{Utility and model inversion results on FaceScrub and Palm with ResNet-50 targets.}
Open circles and filled markers show accuracy before and after quantizer refinement; open squares (P) denote independent calibration-only PTQ references, and dashed lines indicate FP32 accuracy.
Heatmap entries are changes from the FP32 baseline for each attack and target; absolute baselines appear in brackets.
Changes in Strict success and Evaluator-KNN@1 are in percentage points; LPIPS-Alex and FID retain their original units.
Blue indicates lower identity-hit rates or larger distances, and orange indicates the reverse; each heatmap has its own zero-centered color scale.
$0$, $+$, and $-$ denote refinement with $\eta=0$, $0.1$, and $-0.001$, respectively.
Superscripts $1$ and $2$ distinguish mixed-precision candidate sets $\{6,5,4\}$ and $\{6,5,4,3\}$.
Palm additionally includes CDM-MIA.}
\label{fig:main_face_palm}
\end{figure*}

\begin{figure*}[!t]
\centering
\includegraphics[width=\textwidth]{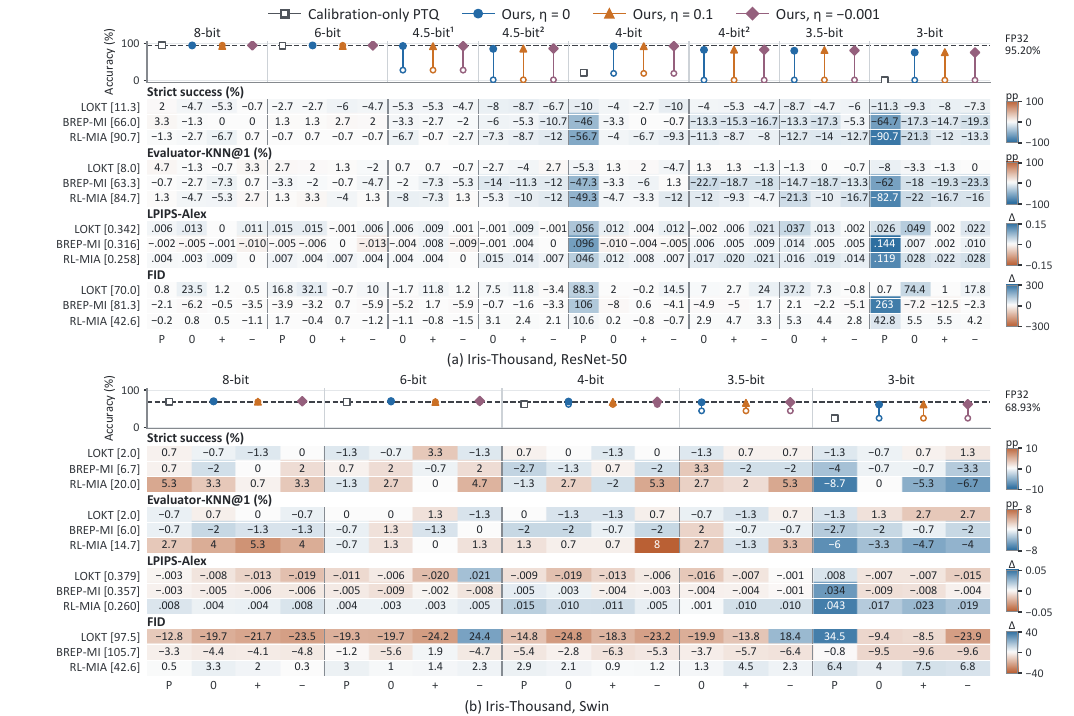}
\caption{\textbf{Utility and model inversion results on Iris-Thousand with ResNet-50 and Swin targets.}
Plotting conventions follow Fig.~\ref{fig:main_face_palm}.}
\label{fig:main_iris_architectures}
\end{figure*}

\subsection{Experimental Setup}
\label{subsec:experimental_setup}

\subsubsection{Datasets and Splits}
We study identity recovery on face, palmprint, and iris images; Table~\ref{tab:experimental_setup} summarizes the retained data, target/auxiliary identities, splits, and model accuracies.
\textbf{FaceScrub}~\cite{Ng_2014} contains over 100,000 face photographs of 530 individuals.
We resize the available images to $64\times64$ RGB without cropping.
The auxiliary pool comes from \textbf{CelebA}~\cite{Liu_2015_ICCV}, whose original release contains 202,599 images of 10,177 identities.
Following Yang et al.~\cite{Yang_2019}, we crop CelebA images to remove background before resizing them to the same resolution.
\textbf{Palm} uses CASIA-Palmprint, which contains 5,502 grayscale palmprint images.
\textbf{Iris-Thousand}, from CASIA-IrisV4, contains 20,000 near-infrared images from 1,000 subjects.
Both biometric inputs are resized to $224\times224$ with one channel.
The CASIA dataset links are given in Sec.~\ref{sec:introduction}.

\subsubsection{Attacks and Evaluation Protocol}
We retain one final candidate per identity, including failures, from three label-only attacks and one confidence-based black-box attack.
\textbf{LOKT}~\cite{NEURIPS2023_d9827e81} trains a target-specific conditional GAN for 20,000 iterations and a DenseNet-121 surrogate for 50 epochs, followed by five 600-step latent-optimization restarts.
\textbf{BREP-MI}~\cite{Kahla_2022_CVPR} searches a fixed domain-specific GAN prior with 100,000 optimization queries per identity, excluding initialization-pool queries.
Their limited recovery on Palm motivates testing diffusion-based \textbf{CDM-MIA}~\cite{10458692}.
It trains on hard-labeled auxiliary images for 100 epochs and selects among 12 candidates per identity using label consistency over 100 augmentations.
Its low recovery success (Fig.~\ref{fig:main_face_palm}) further supports the difficulty of label-only recovery on Palm discussed in Sec.~\ref{subsec:data_dependent_inversion}.
\textbf{RL-MIA}~\cite{Han_2023_CVPR} uses confidence-based rewards for up to 10,000 one-step episodes per identity.
Early stopping requires a successful candidate with probability at least $0.999999$, no improvement exceeding $10^{-7}$ for 1,000 episodes, and at least 80\% target hits over the latest 200 episodes.
Nearly all Palm and Iris runs reach the cap; Face stops early more often.
Appendix~J gives full protocols.
Complementing identity recovery, we evaluate \textbf{input reconstruction} from intermediate features using an attack design based on Liu et al.~\cite{liu2026partitioning} (Table~\ref{tab:input_reconstruction}).
% Detailed reconstruction analysis belongs in the planned Sec. VII discussion.

Following standard identity-recovery evaluation~\cite{Kahla_2022_CVPR,Han_2023_CVPR}, we train independent ResNet-101 evaluators from scratch on real labeled images, avoiding reliance on target predictions alone.
Their splits and accuracies appear in Table~\ref{tab:experimental_setup}; neither attack optimization nor candidate selection uses evaluator feedback.

\subsubsection{Target Models and Quantization Settings}
Using the targets in Table~\ref{tab:experimental_setup}, we compare FP32, independent calibration-only uniform W8A8/W6A6/W4A4/W3A3 PTQ, and our uniform or mixed-precision models (Figs.~\ref{fig:main_face_palm}--\ref{fig:main_iris_architectures}).
Floating-point quantize--dequantize operations use per-output-channel weight scales and per-channel activation scales, with separate weight and activation average-bit budgets (Sec.~\ref{subsec:bit_allocation_calibration}).
Fisher \mbox{estimation} uses up to 50 batches of 128/16/16/4 images for Face/Palm/Iris--ResNet-50/Iris--Swin; activation calibration uses the full calibration pool.
Activation parameters start at $\log s^*$ for ResNet and $\operatorname{softplus}^{-1}(s^*)$ for Swin; Appendix~\mbox  {J-A} gives fixed parameter references and scale bounds.
Feature geometry uses the activation after the second convolution of ResNet's final bottleneck, or Swin's pooled pre-head features.

Refinement runs for 100 epochs with Adam and learning rates $5\times10^{-5}$, $3\times10^{-5}$, and $10^{-5}$ starting at epochs 1, 41, and 76.
Class-balanced batches use $P$ classes and $K$ images per class, with $(P,K)=(32,4)$ for Face, $(10,2)$ for Palm/Iris--ResNet-50, and $(4,2)$ for Iris--Swin.
The CE/KL weights are 6/1, output/feature geometry weights are 2/0.5, and activation-scale/weight-scale/rounding regularization weights are $10^{-4}/10^{-5}/10^{-4}$.
We use $\eta\in\{0,0.1,-0.001\}$ and untempered teacher probabilities.
Refinement checkpoints are selected by hard-quantized test accuracy, so the reported utility is not an independent held-out estimate; calibration-only references have no such selection.
We also combine quantization with \textbf{SSD}~\cite{ICLR2025_1d8e261c}, using $T=0.1$ and validation-selected, frozen per-model $\epsilon$ values (Fig.~\ref{fig:ssd_composition}).

Our method requires representative calibration images with labels covering the target classes, but does not require them to belong to the original training set.
AdaRound~\cite{pmlr-v119-nagel20a} reports similar quantization accuracy using nontraining images from related domains.
We reuse the target-training pools to make full use of limited labeled data while preserving the attack auxiliary pools.
Table~\ref{tab:combined_ablations}(c) examines calibration-set size with class-covering subsets and a fixed refinement-update budget.

\subsubsection{Evaluation Metrics}
\textbf{Utility} is the target's top-1 recognition accuracy on real test images.
For recovered candidates, \textbf{Target} and \textbf{Evaluator} measure top-1 agreement with the intended identity; \textbf{Strict} requires both classifiers to agree with it.
\textbf{T-KNN@1} and \textbf{E-KNN@1} measure nearest-neighbor identity matches in the target and evaluator feature spaces, using squared Euclidean distance to real-image reference banks.
Learned perceptual image patch similarity (\textbf{LPIPS-Alex/VGG})~\cite{Zhang_2018_CVPR} averages each candidate's minimum \mbox{perceptual} distance to at most 20 real reference images of its intended identity.
Fr\'echet inception distance (\textbf{FID})~\cite{NIPS2017_8a1d6947} measures the distributional distance between recovered and real-image features.
Lower identity-hit rates and larger perceptual/distributional distances favor defense, interpreted jointly with utility.
Feature extraction, reference-bank construction, and distance computation are specified in Appendix~\mbox  {J-D}.
For input reconstruction, following prior evaluations~\cite{liu2026partitioning}, we use mean squared error (\textbf{MSE}), peak signal-to-noise ratio (\textbf{PSNR}), and structural similarity (\textbf{SSIM}) against the corresponding input; higher MSE and lower PSNR/SSIM favor defense.
Here, \textbf{LPIPS-Alex} also uses the corresponding input.

\subsection{Utility--Privacy Trade-offs under Quantization}
\label{subsec:main_results}

Figures~\ref{fig:main_face_palm} and~\ref{fig:main_iris_architectures} evaluate whether the design in Sec.~\ref{sec:privacy_aware_ptq} recovers recognition utility while limiting identity recovery.
The calibration-only references show why these objectives must be assessed together.
At 4 bits, ResNet-50 accuracy is $74.40\%$ on Face, $60.10\%$ on Palm, and $21.33\%$ on Iris, compared with FP32 accuracies of $78.98\%$, $99.01\%$, and $95.20\%$.
This divergence confirms the data-dependent sensitivity identified in Sec.~\ref{subsec:data_dependent_quantization} and supports examining the 4-bit regime highlighted by the analysis in Sec.~\ref{subsec:prediction_information}.
A large attack decrease is not useful when utility falls near chance, as for the 3-bit Palm and Iris references.

% Queue these figures with VI-B so its text can flow below them.
% Keep the two page-15 floats together instead of stretching their separation.
\begingroup
\makeatletter
\setlength{\@dblfpsep}{14pt}
\makeatother
\begin{figure*}[!t]
\centering
\includegraphics[width=\textwidth]{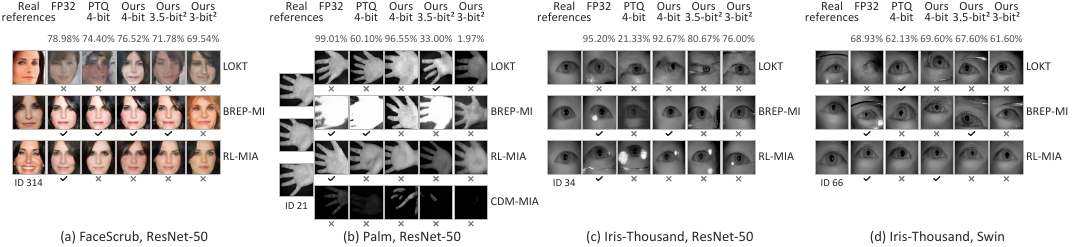}
\caption{\textbf{Qualitative identity recovery under quantization.}
Each panel shows three real images of one target identity and final attack candidates from FP32, calibration-only 4-bit PTQ, and our 4-, 3.5-, and 3-bit models ($\eta=0$).
The real images are identity references, not paired reconstruction targets.
Both 4-bit columns use uniform W4A4; candidate-set superscripts follow Fig.~\ref{fig:main_face_palm}.
Percentages are target-test accuracies.
Check marks denote Strict success (both the target and independent evaluator predict the target identity); crosses denote failure.
All shown identities satisfy FP32 Strict success for BREP-MI and RL-MIA.
The Iris examples additionally require Strict failure in at least two of the three displayed Ours configurations for each of these attacks; LOKT and CDM-MIA outcomes do not constrain selection.
The Iris panels use different identities.}
\label{fig:qualitative_identity_recovery}
\end{figure*}

\begin{figure*}[!t]
\centering
\includegraphics[width=\textwidth]{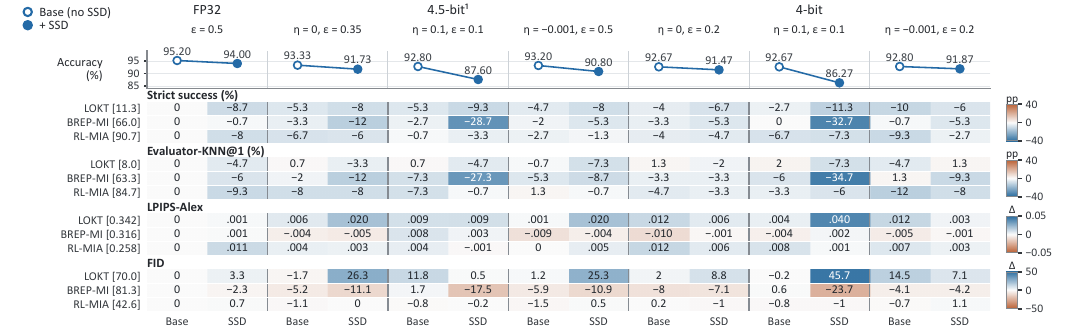}
\caption{\textbf{Composition with SSD~\cite{ICLR2025_1d8e261c} on Iris-Thousand with a ResNet-50 target.}
Open and filled circles show test accuracy without (Base) and with SSD for each matched checkpoint.
SSD uses $T=0.1$ and the displayed per-model $\epsilon$, frozen under model-specific validation budgets.
Accuracy uses one response per test image (750 images).
Heatmaps use each attack's original FP32 baseline without SSD.
Superscript definitions and other heatmap conventions follow Fig.~\ref{fig:main_face_palm}.
The 4-bit models use uniform W4A4.
Results include all 150 final candidates per non-adaptive attack, retaining failures.}
\label{fig:ssd_composition}
\end{figure*}
\endgroup

Figure~\ref{fig:main_face_palm}(b) demonstrates a useful trade-off at uniform 4 bits.
With $\eta=0$, our Palm model improves accuracy from $61.58\%$ before refinement to $96.55\%$, while BREP-MI and RL-MIA Strict success is $12\%$ and $26\%$, below their FP32 values of $20\%$ and $54\%$.
It also improves on the independent 4-bit reference's $60.10\%$ accuracy and $20\%/32\%$ Strict rates.
Thus, utility recovery need not restore the original identity-recovery risk, directly validating the intended trade-off of Sec.~\ref{subsec:quantizer_refinement} in this setting.
The RL-MIA E-KNN rate also falls from $44\%$ to $24\%$, corroborating the reduction beyond the evaluator's classification head.
In Fig.~\ref{fig:main_iris_architectures}(a), uniform 4-bit refinement restores Iris accuracy from $19.60\%$ to $92.67\%$, close to FP32's $95.20\%$.
Its BREP-MI/RL-MIA Strict rates remain $62.67\%/86.67\%$, however, showing that strong utility recovery alone does not ensure a large privacy gain.

The mixed-precision results reveal additional operating points and their domain dependence.
In Fig.~\ref{fig:main_iris_architectures}(a), Iris at 3.5 bits with $\eta=0.1$ reaches $82.53\%$ accuracy while reducing BREP-MI/RL-MIA Strict success to $48.67\%/76.67\%$, from FP32's $66\%/90.67\%$.
RL-MIA also shows lower E-KNN and higher LPIPS/FID, supporting reduced recovery across complementary metrics.
At 3 bits with $\eta=0$, Iris retains $76\%$ accuracy and lowers RL-MIA Strict success to $69.33\%$.
Palm responds differently: its 3.5-bit accuracy remains at most $33.50\%$, and 3-bit refinement stays at $1.97\%$.
These results support the role of data-dependent quantization response in Secs.~\ref{subsec:representation_geometry} and~\ref{subsec:data_dependent_quantization}: the usable precision range must be established for each domain.

Figure~\ref{fig:main_face_palm}(a) further shows that the refinement objective changes the trade-off at a fixed bit width.
For Face at 3.5 bits, $\eta=-0.001$ yields $71.07\%$ accuracy and BREP-MI/RL-MIA Strict rates of $45.47\%/52.26\%$, below FP32's $52.08\%/57.92\%$.
Using $\eta=0.1$ gives similar accuracy ($71.73\%$), but RL-MIA Strict success rises to $60.38\%$, although LOKT and BREP-MI still weaken relative to FP32.
RL-MIA can still exploit relative confidence values even if smoothing reduces peak probabilities, so weaker label-only recovery need not imply less effective confidence-based search.
Similar task accuracy at the same bit width can therefore accompany substantially different inversion risk, \mbox{consistent} with the configuration-dependent analysis in Sec.~\ref{subsec:prediction_information}.
The subsequent cumulative ablations examine the contribution of the geometry and parameter penalties.

Figure~\ref{fig:main_iris_architectures}(b) tests transfer to a Swin target.
At 3 bits with $\eta=-0.001$, our model reaches $62.27\%$ accuracy, compared with $24.93\%$ for calibration-only PTQ, while BREP-MI/RL-MIA Strict success is $3.33\%/13.33\%$, below FP32's $6.67\%/20\%$.
This demonstrates that the method's utility-recovery capability extends to the Transformer target, with architecture- and attack-dependent privacy gains.
Re-evaluating the same candidates with Swin-B changes absolute success rates and some FP32-relative trends; Appendix~\mbox  {K-D} examines this sensitivity, and Appendix~K gives all numerical results.

Finally, Fig.~\ref{fig:qualitative_identity_recovery} illustrates these changes for selected identities.
Candidates can preserve recognizable face or texture structure while failing target-identity recognition, and success varies across attacks at the same precision.
These examples complement the aggregate Strict and E-KNN measurements.
Taken together, Figs.~\ref{fig:main_face_palm}--\ref{fig:qualitative_identity_recovery} support the data- and model-dependent relationship established in Secs.~\ref{sec:information_theory}--\ref{sec:data_effects} and demonstrate that the proposed refinement can recover substantial utility while retaining reduced identity-recovery risk.
The remaining risk motivates combining PTQ with SSD.

\subsection{Composition with Output-Level Defense}
\label{subsec:ssd_composition}

Figure~\ref{fig:ssd_composition} evaluates SSD~\cite{ICLR2025_1d8e261c} as a complement to the quantizer refinement in Sec.~\ref{sec:privacy_aware_ptq}.
SSD post-processes predictions without further model training.
We rerun all three attacks on seven matched checkpoints with and without SSD, following Appendix~J.
On FP32, protection against BREP-MI and RL-MIA remains limited: SSD changes their Strict success from $66\%/90.67\%$ to $65.33\%/82.67\%$, with accuracy decreasing from $95.2\%$ to $94\%$.

% Queue the ablation and reconstruction tables for page 16.
% Remaining ablation and reconstruction tables reserved for page 16.
\begin{table*}[!t]
\caption{Objective and calibration-data ablations on ResNet-50 ($\eta=0$).
Arrows indicate defender-preferred values.}
\label{tab:combined_ablations}
\centering
\begingroup
\setlength{\tabcolsep}{2.0pt}
\renewcommand{\arraystretch}{1.0}
\setlength{\aboverulesep}{0.5pt}
\setlength{\belowrulesep}{0.5pt}
\fontsize{6.0pt}{6.2pt}\selectfont
\begin{tabular*}{\textwidth}{@{\extracolsep{\fill}}ccccc c ccccc ccc@{}}
\toprule
\multirow{2}{*}{Variant} & \multicolumn{2}{c}{Accuracy (\%)} & \multicolumn{2}{c}{Change (pp)} & \multirow{2}{*}{Attack} & \multicolumn{5}{c}{Identity rates (\%)} & \multicolumn{2}{c}{LPIPS} & \multirow{2}{*}{FID $\uparrow$} \\
\cmidrule(lr){2-5}\cmidrule(lr){7-13}
& Pre $\uparrow$ & Post $\uparrow$ & $\Delta$ & $\Delta_{\mathrm{FP32}}$ & & Target $\downarrow$ & Eval. $\downarrow$ & Strict $\downarrow$ & T-KNN $\downarrow$ & E-KNN $\downarrow$ & Alex $\uparrow$ & VGG $\uparrow$ & \\
\midrule
\multicolumn{14}{l}{\textit{(a) Palm: cumulative objective ablation, uniform W4A4}} \\
\multirow{3}{*}{A} & \multirow{3}{*}{$61.58$} & \multirow{3}{*}{$96.06$} & \multirow{3}{*}{$34.48$} & \multirow{3}{*}{$-2.96$} & LOKT~\cite{NEURIPS2023_d9827e81} & $30.00$ & $6.00$ & $4.00$ & $24.00$ & $4.00$ & $0.399$ & $0.522$ & $168.11$ \\
 &  &  &  &  & BREP-MI~\cite{Kahla_2022_CVPR} & $94.00$ & $20.00$ & $20.00$ & $68.00$ & $20.00$ & $0.418$ & $0.558$ & $372.60$ \\
 &  &  &  &  & RL-MIA~\cite{Han_2023_CVPR} & $64.00$ & $28.00$ & $28.00$ & $64.00$ & $24.00$ & $0.389$ & $0.541$ & $328.01$ \\
\cmidrule(lr){6-14}
\multirow{3}{*}{AB} & \multirow{3}{*}{$61.58$} & \multirow{3}{*}{$96.55$} & \multirow{3}{*}{$34.98$} & \multirow{3}{*}{$-2.46$} & LOKT & $46.00$ & $12.00$ & $10.00$ & $24.00$ & $16.00$ & $0.403$ & $0.519$ & $166.08$ \\
 &  &  &  &  & BREP-MI & $92.00$ & $24.00$ & $24.00$ & $60.00$ & $18.00$ & $0.424$ & $0.562$ & $364.17$ \\
 &  &  &  &  & RL-MIA & $62.00$ & $28.00$ & $28.00$ & $62.00$ & $32.00$ & $0.389$ & $0.539$ & $337.61$ \\
\cmidrule(lr){6-14}
\multirow{3}{*}{ABC} & \multirow{3}{*}{$61.58$} & \multirow{3}{*}{$96.55$} & \multirow{3}{*}{$34.98$} & \multirow{3}{*}{$-2.46$} & LOKT & $30.00$ & $8.00$ & $6.00$ & $22.00$ & $4.00$ & $0.399$ & $0.524$ & $187.91$ \\
 &  &  &  &  & BREP-MI & $90.00$ & $20.00$ & $20.00$ & $60.00$ & $18.00$ & $0.435$ & $0.566$ & $359.80$ \\
 &  &  &  &  & RL-MIA & $66.00$ & $34.00$ & $34.00$ & $64.00$ & $28.00$ & $0.380$ & $0.533$ & $345.51$ \\
\cmidrule(lr){6-14}
\multirow{3}{*}{ABCD} & \multirow{3}{*}{$61.58$} & \multirow{3}{*}{$96.55$} & \multirow{3}{*}{$34.98$} & \multirow{3}{*}{$-2.46$} & LOKT & $30.00$ & $8.00$ & $8.00$ & $24.00$ & $12.00$ & $0.400$ & $0.520$ & $165.91$ \\
 &  &  &  &  & BREP-MI & $92.00$ & $12.00$ & $10.00$ & $64.00$ & $14.00$ & $0.432$ & $0.565$ & $365.72$ \\
 &  &  &  &  & RL-MIA & $62.00$ & $26.00$ & $24.00$ & $62.00$ & $30.00$ & $0.381$ & $0.537$ & $337.36$ \\
\midrule
\multicolumn{14}{l}{\textit{(b) Iris-Thousand: cumulative objective ablation, 4-bit mixed precision}} \\
\multirow{3}{*}{A} & \multirow{3}{*}{$1.47$} & \multirow{3}{*}{$88.80$} & \multirow{3}{*}{$87.33$} & \multirow{3}{*}{$-6.40$} & LOKT~\cite{NEURIPS2023_d9827e81} & $26.00$ & $10.00$ & $4.67$ & $20.00$ & $9.33$ & $0.386$ & $0.484$ & $142.12$ \\
 &  &  &  &  & BREP-MI~\cite{Kahla_2022_CVPR} & $97.33$ & $61.33$ & $61.33$ & $82.00$ & $52.00$ & $0.320$ & $0.441$ & $80.07$ \\
 &  &  &  &  & RL-MIA~\cite{Han_2023_CVPR} & $100.00$ & $85.33$ & $85.33$ & $99.33$ & $79.33$ & $0.263$ & $0.394$ & $42.20$ \\
\cmidrule(lr){6-14}
\multirow{3}{*}{AB} & \multirow{3}{*}{$1.47$} & \multirow{3}{*}{$89.47$} & \multirow{3}{*}{$88.00$} & \multirow{3}{*}{$-5.73$} & LOKT & $26.67$ & $5.33$ & $2.67$ & $16.00$ & $9.33$ & $0.352$ & $0.452$ & $65.94$ \\
 &  &  &  &  & BREP-MI & $98.67$ & $63.33$ & $63.33$ & $83.33$ & $55.33$ & $0.314$ & $0.438$ & $81.15$ \\
 &  &  &  &  & RL-MIA & $100.00$ & $88.00$ & $88.00$ & $98.67$ & $80.67$ & $0.260$ & $0.392$ & $41.98$ \\
\cmidrule(lr){6-14}
\multirow{3}{*}{ABC} & \multirow{3}{*}{$1.47$} & \multirow{3}{*}{$83.20$} & \multirow{3}{*}{$81.73$} & \multirow{3}{*}{$-12.00$} & LOKT & $22.67$ & $10.67$ & $4.67$ & $20.67$ & $10.67$ & $0.350$ & $0.453$ & $76.98$ \\
 &  &  &  &  & BREP-MI & $96.00$ & $60.00$ & $59.33$ & $65.33$ & $55.33$ & $0.323$ & $0.444$ & $81.77$ \\
 &  &  &  &  & RL-MIA & $100.00$ & $78.67$ & $78.67$ & $92.67$ & $68.67$ & $0.277$ & $0.403$ & $45.20$ \\
\cmidrule(lr){6-14}
\multirow{3}{*}{ABCD} & \multirow{3}{*}{$1.47$} & \multirow{3}{*}{$83.20$} & \multirow{3}{*}{$81.73$} & \multirow{3}{*}{$-12.00$} & LOKT & $26.67$ & $6.67$ & $4.00$ & $16.67$ & $4.00$ & $0.359$ & $0.458$ & $92.56$ \\
 &  &  &  &  & BREP-MI & $97.33$ & $51.33$ & $51.33$ & $67.33$ & $46.00$ & $0.324$ & $0.445$ & $84.40$ \\
 &  &  &  &  & RL-MIA & $100.00$ & $80.00$ & $80.00$ & $93.33$ & $68.00$ & $0.272$ & $0.402$ & $44.84$ \\
\midrule
\multicolumn{14}{l}{\textit{(c) Iris-Thousand: calibration-data amount, 4-bit mixed precision}} \\
\multirow{3}{*}{25\% ($n=562$)} & \multirow{3}{*}{$1.20$} & \multirow{3}{*}{$81.47$} & \multirow{3}{*}{$80.27$} & \multirow{3}{*}{$-13.73$} & LOKT~\cite{NEURIPS2023_d9827e81} & $26.00$ & $12.00$ & $5.33$ & $12.67$ & $8.67$ & $0.345$ & $0.453$ & $71.82$ \\
 &  &  &  &  & BREP-MI~\cite{Kahla_2022_CVPR} & $91.33$ & $52.67$ & $52.67$ & $60.67$ & $47.33$ & $0.322$ & $0.442$ & $76.74$ \\
 &  &  &  &  & RL-MIA~\cite{Han_2023_CVPR} & $100.00$ & $76.67$ & $76.67$ & $92.00$ & $71.33$ & $0.273$ & $0.402$ & $46.53$ \\
\cmidrule(lr){6-14}
\multirow{3}{*}{50\% ($n=1125$)} & \multirow{3}{*}{$1.07$} & \multirow{3}{*}{$80.67$} & \multirow{3}{*}{$79.60$} & \multirow{3}{*}{$-14.53$} & LOKT & $20.67$ & $6.00$ & $4.00$ & $10.00$ & $6.00$ & $0.346$ & $0.457$ & $81.08$ \\
 &  &  &  &  & BREP-MI & $96.00$ & $55.33$ & $55.33$ & $66.67$ & $46.00$ & $0.320$ & $0.443$ & $83.81$ \\
 &  &  &  &  & RL-MIA & $100.00$ & $82.00$ & $82.00$ & $93.33$ & $72.00$ & $0.276$ & $0.404$ & $48.98$ \\
\cmidrule(lr){6-14}
\multirow{3}{*}{75\% ($n=1688$)} & \multirow{3}{*}{$1.33$} & \multirow{3}{*}{$83.60$} & \multirow{3}{*}{$82.27$} & \multirow{3}{*}{$-11.60$} & LOKT & $19.33$ & $8.67$ & $4.00$ & $11.33$ & $6.67$ & $0.354$ & $0.459$ & $73.39$ \\
 &  &  &  &  & BREP-MI & $94.67$ & $58.67$ & $58.00$ & $63.33$ & $51.33$ & $0.316$ & $0.439$ & $76.13$ \\
 &  &  &  &  & RL-MIA & $100.00$ & $71.33$ & $71.33$ & $96.00$ & $68.67$ & $0.276$ & $0.403$ & $47.26$ \\
\cmidrule(lr){6-14}
\multirow{3}{*}{100\%$^{\dagger}$ ($n=2250$)} & \multirow{3}{*}{$1.47$} & \multirow{3}{*}{$82.93$} & \multirow{3}{*}{$81.47$} & \multirow{3}{*}{$-12.27$} & LOKT & $32.00$ & $11.33$ & $7.33$ & $15.33$ & $9.33$ & $0.341$ & $0.450$ & $77.00$ \\
 &  &  &  &  & BREP-MI & $94.67$ & $52.67$ & $52.67$ & $63.33$ & $40.67$ & $0.323$ & $0.441$ & $76.48$ \\
 &  &  &  &  & RL-MIA & $100.00$ & $79.33$ & $79.33$ & $93.33$ & $72.67$ & $0.275$ & $0.402$ & $45.49$ \\
\bottomrule
\end{tabular*}
\par\vspace{3pt}
\begin{minipage}{\textwidth}
\fontsize{6.0pt}{6.2pt}\selectfont
\textit{Note:} Pre/Post: accuracy before/after refinement; $\Delta=\mathrm{Post}-\mathrm{Pre}$ and $\Delta_{\mathrm{FP32}}=\mathrm{Post}-\mathrm{FP32}$ (pp).
A: classification; B: distillation; C: geometry retention; D: scale/rounding regularization; all groups optimize scales and rounding.
Target/Eval.: corresponding top-1 rates; Strict: simultaneous hits; T/E-KNN: KNN@1 in the corresponding feature space.
LPIPS uses the nearest target-identity real image; FID is a set-level distance.
All 50 Palm or 150 Iris candidates are included.
Iris uses 4-bit weight/activation budgets with candidates $\{6,5,4,3\}$.
In (c), the 25--75\% subsets cover all 150 classes and are shared by all three stages, with a fixed refinement-update budget.
$^{\dagger}$Historical reference with different duplicate/overlap filtering, not a matched 100\% control.
Test accuracy was used for checkpoint selection.
\end{minipage}
\endgroup
\end{table*}

% Independent single-column float: leave the right column available for text.
\begin{table}[!t]
% Independent single-column table placed below Table II on page 16.
% Source: 报告/reconstruction_attack_record_corrected_loader_ls_comparison_with_fp32_zh.md
% FP32 rows use the complete four-metric rows in Sec. 0.1, not the older Sec. 0.2 row.
% Original evaluation outputs have not been recovered; source details remain in
% reconstruction_results_table.source.json. Candidate-set superscripts are omitted
% because this summary does not identify its checkpoints by path or hash.
\caption{Input reconstruction on FaceScrub (ResNet-50).}
\label{tab:input_reconstruction}
\centering
\begingroup
\setlength{\tabcolsep}{2pt}
\renewcommand{\arraystretch}{1.0}
\setlength{\aboverulesep}{0.5pt}
\setlength{\belowrulesep}{0.5pt}
\fontsize{6.0pt}{6.2pt}\selectfont
\begin{tabular*}{\linewidth}{@{\extracolsep{\fill}}cccccc@{}}
\toprule
Precision & $\eta$ & MSE $\uparrow$ & PSNR (dB) $\downarrow$ & SSIM $\downarrow$ & LPIPS-Alex $\uparrow$ \\
\midrule
\multicolumn{6}{l}{\textit{(a) feat20}} \\
FP32 & --- & $0.008440$ & $25.524$ & $0.9568$ & $0.0965$ \\
3.5-bit & $0.1$ & $0.011302$ & $24.250$ & $0.9409$ & $0.1226$ \\
3.5-bit & $-0.001$ & $0.011493$ & $24.177$ & $0.9393$ & $0.1217$ \\
3-bit & $0.1$ & $0.011563$ & $24.151$ & $0.9398$ & $0.1199$ \\
3-bit & $-0.001$ & $0.011546$ & $24.158$ & $0.9392$ & $0.1220$ \\
\midrule
\multicolumn{6}{l}{\textit{(b) feat46}} \\
FP32 & --- & $0.075299$ & $16.023$ & $0.6215$ & $0.2519$ \\
3.5-bit & $0.1$ & $0.085650$ & $15.466$ & $0.5620$ & $0.3008$ \\
3.5-bit & $-0.001$ & $0.075929$ & $15.984$ & $0.6032$ & $0.2783$ \\
3-bit & $0.1$ & $0.082937$ & $15.601$ & $0.5628$ & $0.3235$ \\
3-bit & $-0.001$ & $0.076033$ & $15.979$ & $0.6104$ & $0.2813$ \\
\bottomrule
\end{tabular*}
\par\vspace{3pt}
\begin{minipage}{\linewidth}
\fontsize{6.0pt}{6.2pt}\selectfont
\textit{Note:} Metrics compare reconstructions with their corresponding inputs.
$\eta$ denotes label smoothing during refinement; arrows indicate defender-preferred values.
\end{minipage}
\endgroup

\end{table}

Much larger BREP-MI reductions occur for quantized models refined with label smoothing ($\eta=0.1$).
At 4.5 bits with $\epsilon=0.1$, adding SSD lowers BREP-MI Strict success from $63.33\%$ to $37.33\%$, while accuracy decreases from $92.8\%$ to $87.6\%$.
E-KNN also falls from $56\%$ to $36\%$, supporting weaker identity recovery beyond changes in the target's final decisions.
At uniform 4 bits with the same $\eta$ and $\epsilon$, BREP-MI Strict success decreases from $66\%$ to $33.33\%$ and LOKT from $8.67\%$ to $0\%$, with accuracy changing from $92.67\%$ to $86.27\%$.
Thus, SSD can provide substantial additional identity protection after privacy-aware PTQ with label smoothing.

The interaction between label smoothing and SSD's perturbation rule helps explain this pattern.
Positive smoothing discourages sharply peaked predictions; the saved test outputs of the $\eta=0.1$ models indeed have higher entropy and smaller top-two probability gaps than FP32 on average.
SSD's entropy-scaled $\ell_1$ budget, $\epsilon H(\mathbf p)$, then permits larger perturbations, while smaller gaps require less probability redistribution to change the returned label.
Smaller gaps also increase the relative probability that SSD conditions on a competing class.
Existing validation sweeps support this mechanism: at the same $\epsilon=0.1$, expected label-flip rates are $0.26\%$ for FP32 versus $5.03\%$ and $5.04\%$ for the 4.5- and 4-bit $\eta=0.1$ models.
This greater susceptibility to label changes is also consistent with their larger utility cost; Appendix~\mbox  {L-C} gives the output statistics and sweep results.

BREP-MI Strict success decreases in all seven paired comparisons, whereas RL-MIA responds less consistently.
For example, at uniform 4 bits with $\eta=-0.001$, adding SSD increases RL-MIA Strict success from $81.33\%$ to $88\%$.
The added protection is therefore configuration- and attack-dependent.
The attack comparisons also use model-specific utility budgets, rather than a common defense setting; Appendix~L gives the complete paired results.

\subsection{Ablation Studies}
\label{subsec:ablations}

Table~\ref{tab:combined_ablations} examines the privacy constraints in Sec.~\ref{subsec:quantizer_refinement} and calibration-data quantity.
Within each dataset, cumulative objective variants share initialization and bit allocation, optimizing scales and rounding throughout.

On Palm (Table~\ref{tab:combined_ablations}(a)), AB, ABC, and ABCD all attain $96.55\%$ accuracy.
Relative to CE--KL refinement (AB), the complete objective (ABCD) reduces LOKT/BREP-MI/RL-MIA Strict success from $10\%/24\%/28\%$ to $8\%/10\%/24\%$, with lower E-KNN across all three attacks.
This supports the intended privacy improvement at matched utility.
Individual additions remain attack-dependent: geometry retention raises RL-MIA Strict success from $28\%$ to $34\%$, while the subsequent parameter penalties raise LOKT from $6\%$ to $8\%$.
The benefit therefore comes from the combined constraints, without a uniform gain from every term.

On Iris (Table~\ref{tab:combined_ablations}(b)), AB to ABCD lowers BREP-MI/RL-MIA Strict success from $63.33\%/88\%$ to $51.33\%/80\%$, but accuracy falls from $89.47\%$ to $83.20\%$.
At unchanged $83.20\%$ accuracy, ABC to ABCD lowers BREP-MI Strict success from $59.33\%$ to $51.33\%$, although RL-MIA rises slightly.
Parameter regularization thus improves BREP-MI resistance at matched accuracy, with attack-dependent effects.

Table~\ref{tab:combined_ablations}(c) uses class-covering subsets and a fixed refinement-update budget.
Only $25\%$ of the original pool ($562$ images) already yields $81.47\%$ accuracy; $50\%$ and $75\%$ yield $80.67\%$ and $83.60\%$.
Additional data therefore brings limited, nonmonotonic utility gains over this range.
BREP-MI Strict success increases with pool size, while RL-MIA is nonmonotonic, so larger pools do not consistently improve attack resistance either.
The historical $100\%$ row is not a matched control because its duplicate and overlap filtering differs.
Appendix~M gives the detailed controls and additional metric comparisons.

\section{Discussion}
\label{sec:discussion}
We discuss input reconstruction and the storage and computational costs of post-training quantization.

\subsection{Input Reconstruction}
\label{subsec:input_reconstruction}
Table~\ref{tab:input_reconstruction} complements identity recovery with FaceScrub input reconstruction from intermediate features.
Under FP32, SSIM is $0.9568$ at feat20 but $0.6215$ at feat46, demonstrating the baseline difficulty difference discussed in Sec.~\ref{subsec:data_dependent_inversion}.

All tested quantized configurations worsen all four reconstruction metrics relative to their same-layer FP32 baseline.
At feat46, $\eta=0.1$ impairs reconstruction more than $\eta=-0.001$ at both precisions, whereas feat20 differences are small and metric-dependent.
This layer dependence is consistent with the Neural Vortex interpretation of Liu et al.~\cite{liu2026partitioning}: label smoothing can increase output entropy while tightening within-class geometry in decision-oriented layers, potentially hindering input reconstruction.
Moving from 3.5 to 3 bits with $\eta=0.1$ improves feat46's pixel-based metrics but worsens LPIPS.

The Palm examples in Appendix~N retain blurred palm-line structure at feat3 even with 3-bit activation quantization.
This shallow cut leaves most backbone layers on the cloud and produces a large feature map; our collaborative-inference analysis therefore focuses on feat21 and deeper cuts.
Only faint palm-line traces remain at feat21, while later examples, including FP32, mainly preserve hand contours.
FP32 prediction-based reconstructions likewise lack clear palm-line detail.
These observations qualitatively complement Fig.~\ref{fig:data_characteristics_quantization}(b).

\subsection{Storage and Computational Cost}
\label{subsec:deployment_cost}
We use floating-point quantize--dequantize operations~\cite{pmlr-v119-nagel20a,2021-iclr-brecq}, as native support for flexible layer-wise precisions and scale granularities remains backend-dependent~\cite{pmlr-v139-yao21a,Wei_2025}.
The complete offline quantization workflow is directly executable and timed, whereas simulated inference incurs floating-point computation and quantize--dequantize overhead.
It therefore does not yield representative native low-bit inference latency, which requires compatible kernels and hardware-specific benchmarking.

Ideal packed-weight storage is $\sum_\ell n_\ell b_\ell^W/8$ bytes, or $\bar b_W/32$ of the corresponding FP32 weight payload (Sec.~\ref{subsec:bit_allocation_calibration}), excluding unquantized parameters, scales, and format overhead.
Uniform 4-bit quantization gives an eightfold payload reduction, with estimated sizes of $11.70$, $11.23$, $11.33$, and $41.34$~MiB for Face--ResNet-50, Palm--ResNet-50, Iris--ResNet-50, and Iris--Swin, respectively.

For W4A4 with $\eta=0$ and 100 refinement epochs, complete offline workflows took 6~min~19~s for Palm--ResNet-50 and 20~min~17~s for Iris--ResNet-50, including quantization, evaluation, analysis, and final checks.

\section{Conclusion}
\label{sec:conclusion}
We investigated the relationship between model quantization and MIAs through information-theoretic analysis and cross-domain experiments.
Our analysis bounds changes in categorical prediction information and characterizes label stability and representation perturbations.
The experiments distinguish baseline inversion difficulty from quantization-induced changes and reveal pronounced domain sensitivity differences at 4 bits.
Guided by these findings, our privacy-aware mixed-precision PTQ combines sensitivity-guided bit allocation with geometry-regularized scale and rounding refinement to recover utility while limiting identity recovery.
Evaluations across face, palmprint, and iris recognition demonstrate useful utility--privacy trade-offs and additional protection through composition with output-level defenses.
These findings support assessing quantization jointly with data characteristics, the attack interface, and retained recognition utility.

\bibliographystyle{IEEEtranTPAMI}
\bibliography{references_strings,references}

\end{document}